\theoremstyle{plain}
\newtheorem{theorem}{Theorem}[section]
\theoremstyle{definition}
\theoremstyle{remark}
\theoremstyle{definition}
\newtheorem{assumption}[theorem]{Assumption}
\newcommand{\R}{\mathds R}
\newcommand{\E}{\mathds E}
\newcommand{\I}{\mathbf{I}}
\newcommand{\F}{\mathcal{F}}
\newcommand{\Tr}{\text{Tr}\,}
\newcommand{\valmax}{\lambda_{\text{max}}}
\begin{document}
\begin{frontmatter}
\title{An hybrid stochastic Newton algorithm for logistic regression}
%\title{A sample article title with some additional note\thanksref{t1}}
\runtitle{Hybrid Stochastic Newton algorithm}
%\thankstext{T1}{A sample additional note to the title.}

\begin{aug}
%%%%%%%%%%%%%%%%%%%%%%%%%%%%%%%%%%%%%%%%%%%%%%%
%% Only one address is permitted per author. %%
%% Only division, organization and e-mail is %%
%% included in the address.                  %%
%% Additional information can be included in %%
%% the Acknowledgments section if necessary. %%
%% ORCID can be inserted by command:         %%
%% \orcid{0000-0000-0000-0000}               %%
%%%%%%%%%%%%%%%%%%%%%%%%%%%%%%%%%%%%%%%%%%%%%%%
\author[A]{\fnms{Bernard}~\snm{Bercu}\ead[label=e1]{bernard.bercu@math.u-bordeaux.fr}},
\author[A]{\fnms{Luis}~\snm{Fredes}\ead[label=e2]{luis.fredes@math.u-bordeaux.fr}}
\and
\author[A]{\fnms{Eméric}~\snm{Gbaguidi}\ead[label=e3]{thierry-emeric.gbaguidi@math.u-bordeaux.fr}}
%%%%%%%%%%%%%%%%%%%%%%%%%%%%%%%%%%%%%%%%%%%%%%
%% Addresses                                %%
%%%%%%%%%%%%%%%%%%%%%%%%%%%%%%%%%%%%%%%%%%%%%%
\address[A]{Institut de Mathématiques de Bordeaux, Université de Bordeaux \printead[presep={,\ \\ }]{e1,e2}\\\printead{e3}}

%\address[B]{\printead[]{e3}}
\runauthor{Bercu, Fredes and Gbaguidi}
\end{aug}

\begin{abstract}
In this paper, we investigate a second-order stochastic algorithm for solving large-scale binary classification problems. We propose to make use of a new hybrid stochastic Newton algorithm that includes two weighted components in the Hessian matrix estimation: the first one coming from the natural Hessian estimate and the second associated with the stochastic gradient information. Our motivation comes from the fact that both parts evaluated at the true parameter of logistic regression, are equal to the Hessian matrix. This new formulation has several advantages and it enables us to prove the almost sure convergence of our stochastic algorithm to the true parameter. Moreover, we significantly improve the almost sure rate of convergence to the Hessian matrix. Furthermore, we establish the central limit theorem for our hybrid stochastic Newton algorithm. Finally, we show a surprising result on the almost sure convergence of the cumulative excess risk.
\end{abstract}

\begin{keyword}[class=MSC]
\kwd{49M15, 68W27, 62J12, 60B10, 60F05}
%\kwd[Primary ]{49M15, 68W27, 62J12, 60B10, 60F05}
%\kwd{00X00}
%\kwd[; secondary ]{(to be completed)}
\end{keyword}

\begin{keyword}
\kwd{Stochastic optimization; stochastic Newton algorithm; logistic regression; almost sure convergence; central limit theorem}
\end{keyword}

\end{frontmatter}
%%%%%%%%%%%%%%%%%%%%%%%%%%%%%%%%%%%%%%%%%%%%%%
%% Please use \tableofcontents for articles %%
%% with 50 pages and more                   %%
%%%%%%%%%%%%%%%%%%%%%%%%%%%%%%%%%%%%%%%%%%%%%%
%\tableofcontents

\section{Introduction}

Our goal is to investigate the standard stochastic optimization problem in $\R^d$
\begin{equation}\tag{$\mathcal{P}$}\label{problem}
    \min_{h\in \R^d} G(h),
\end{equation}
where 
\begin{equation}
    G(h)=\E[g(\Phi,Y,h)].
\end{equation}
In what follows, we focus our attention on the well-known logistic regression \cite{bach2010self,bercu2020efficient,boyer2023asymptotic}. For that purpose, we introduce a sequence $(\Phi_n,Y_{n})$ of random vectors taking values in $\R^d \times \{0, 1\}$. Furthermore, we assume that $(\Phi_n)$ is a sequence of independent and identically distributed random vectors such that for all $n\geqslant 1$, the conditional distribution of $Y_{n}$ is a Bernoulli distribution given by
\[\mathcal{L}(Y_{n}\lvert \Phi_n )=\mathcal{B}(\pi(\theta^T\Phi_n)), \qquad \text{where} \qquad \pi(x)=\dfrac{e^x}{1+e^x},\]
and $\theta$ is the unknown parameter belonging to $\R^d$ of the logistic regression. Denote by $G$ the convex function given, for all $h\in \R^d$, by
\begin{equation}
    G(h)=\E\left[\log\Big(1+\exp(h^T \Phi)\Big)- h^T \Phi Y\right],
\end{equation}
where $\mathcal{L}(Y\lvert \Phi)=\mathcal{B}(\pi(\theta^T\Phi))$ and $\Phi$ shares the same distribution as the sequence $(\Phi_n)$.
The gradient and Hessian matrix of $G$ are respectively given by 
\begin{equation}\label{general_grad_logistic}
    \nabla G(h)=\E\left[(\pi(h^T \Phi)-Y)\Phi\right],
\end{equation}
and
\begin{equation}\label{general_hessian_logistic}
    \nabla^2 G(h)=\E\Big[\pi(h^T \Phi)\big(1-\pi(h^T \Phi)\big)\Phi \Phi^T \Big].
\end{equation}
Hence, our goal is to find the unique solution of \eqref{problem} which satisfies $ \nabla G(\theta)=0$.
One can observe that $\E[Y\lvert \Phi]=\pi(\theta^T \Phi)$ and 
\begin{align}
	\Sigma(\theta)=\E\left[\Big(\pi(\theta^T \Phi)-Y\Big)^2\Phi \Phi^T\right]&=\E\left[\E\Big[\Big(\pi(\theta^T \Phi)-Y\Big)^2 \Big \lvert \Phi \Big]\Phi \Phi^T\right] \nonumber \\
	&=\E\Big[\pi(\theta^T \Phi)\big(1-\pi(\theta^T \Phi)\big)\Phi \Phi^T \Big] \nonumber \\
	&=\nabla^2 G(\theta).\label{general_sigma_logistic}
\end{align}
This simple calculation highlights our motivation to propose our hybrid stochastic Newton algorithm defined in Section \ref{sec:framework}.

Several methods have been investigated in order to find $\theta$ such as the stochastic gradient descent and second-order algorithms \cite{bercu2020efficient,hazan2007logarithmic,pelletier1998weak,polyak1992acceleration,robbins1951stochastic}. Here, we shall focus our attention on the family of stochastic Newton algorithms. These methods are more able to deal with the problem where the Hessian matrix of the objective function has eigenvalues with significantly different values. A truncated version of the standard stochastic Newton algorithm has been proposed in \cite{bercu2020efficient}. This truncation was necessary to ensure the convergence of the algorithm but it significantly reduces the rate of convergence to the Hessian matrix. We also refer the reader to \cite{boyer2023asymptotic,de2021stochastic,godichon2024online} for other references on truncated stochastic Newton algorithms. In this paper, we will avoid any truncation in our hybrid stochastic Newton algorithm.  We highlight that in \cite{bercu2020efficient}, the inverse of the Hessian matrix estimation is updated recursively by using the Sherman-Morrison-Woodbury inversion formula \cite{horn2012matrix}. This tool avoids the inversion of large matrix at each iteration.

More recently, a weighted averaged version of the stochastic Newton algorithm was proposed in \cite{boyer2023asymptotic} for more general problems including the case of linear, logistic, and softmax regressions. The almost sure convergence of the iterates to the optimum and their rates of convergence as well as a central limit theorem have been established in \cite{boyer2023asymptotic}. As mentioned before, the same truncated version of stochastic Newton algorithm as \cite{bercu2020efficient} has been maintained in \cite{boyer2023asymptotic} for the specific case of logistic regression. Godichon-Baggioni et al. \cite{godichon2024online}
focused on the recursive estimation technique for the inverse of the Hessian matrix using a Robbins-Monro procedure and considered a weighted averaged version to further enhance the rates of convergence. However, we will significantly improve their rates with our hybrid stochastic Newton  algorithm.

Bach \cite{bach2010self} borrowed tools from self-concordance to analyze the statistical properties of logistic regression estimate and established some oracle inequalities on the excess risk based on concentration inequalities for U-statistics.  For their part, De Vilmarest and Wintenberger \cite{de2021stochastic} provided explicit bounds with high probability on this convergence for the same truncated stochastic Newton algorithm named there the static extended Kalman filter in the logistic regression case. They derived local bounds on the cumulative excess risk with high probability via a matrix concentration inequality. However, the result are proved under a strong convergence assumption which is that with high probability, there exists a convergence time after which the algorithm stays trapped in a local region around the optimum. 

Another variants of the stochastic Newton algorithm have been investigated to address large-scale applications \cite{byrd2016stochastic,mokhtari2014res,shanno1970conditioning,ye2017approximate}. For example, Kovalev et al. \cite{kovalev2019stochastic} proposed the stochastic cubic Newton algorithm which combines cubic regularization with the stochastic Newton technique for minimizing the average of a very large number of sufficiently smooth and strongly convex functions. Furthermore, in the deterministic case, Hazan \cite{hazan2016introduction} introduced the online Newton step which is also known as a quasi-Newton method or Gauss-Newton algorithm \cite{bercu2023stochastic,cenac2025efficient,schraudolph2007stochastic}. However, strictly speaking, this algorithm is first order, in the sense that it only uses gradient information. Next, Hazan \cite{hazan2016introduction} provided the logarithmic regret guarantee of the online Newton step algorithm for exp-concave functions. Recently, Wintenberger \cite{wintenberger2024stochastic} showed that the stochastic online Newton step algorithm achieves fast-rate stochastic regrets under the stochastically exp-concave assumption. 
Moreover, Yousefian et al. \cite{yousefian2020stochastic} studied an iterative regularized stochastic quasi-Newton method for the regularized problems with nonstrongly convex objective functions and Lipschitz continuous gradient. They showed the almost sure convergence of this scheme towards the optimal value of the problem under the bounded variance hypothesis.

In what follows, we highlight our main contributions. On the one hand, we introduce our new hybrid stochastic Newton algorithm motivated by \eqref{general_sigma_logistic}. Our parametric approach incorporates two components in the Hessian matrix estimation: one from the natural Hessian estimate and the second from the stochastic gradient information. This helps to establish many convergence guarantees while avoiding the useless introduction of truncation as it was done in the previous literature. On the other hand, beyond the results of almost sure convergence, we have significantly improved the rate of convergence to the Hessian matrix compared to the current state of art. Lastly, we establish the central limit theorem for our hybrid stochastic Newton algorithm.

The rest of the paper is organized as follows. In Section \ref{sec:framework}, we describe our iterative hybrid stochastic Newton algorithm. Section \ref{sec:main_results} deals with our main results. We prove the almost sure convergence of our algorithm in Theorem \ref{hsna_th_convps}. Hereafter, we provide in Theorem \ref{hsna_th_rates}, the almost sure rates of convergence for the estimation of the parameter $\theta$ as well as for the Hessian matrix $\nabla^2 G(\theta)$. Moreover, Theorem \ref{hsna_th_clt} states the central limit theorem for our hybrid stochastic Newton algorithm. Furthermore, we present the simulations performed on a large-scale classification problem with both synthetic and real datasets in Section \ref{sec:experiments}. Finally, the paper ends with all technical proofs in the appendices.

\section{Our hybrid stochastic Newton algorithm}\label{sec:framework}
In this section, we introduce our new hybrid stochastic Newton algorithm. Next, we will discuss the assumptions required for analysis. Our parametric approach provides a unified framework for the stochastic Newton algorithm and the online Newton step. Our hybrid stochastic Newton algorithm is given, for all $n\geqslant 0$, by 
\begin{align}
    a_{n+1} &= \pi(\widehat{\theta}_{n}^T\Phi_{n+1})(1-\pi(\widehat\theta_n^T\Phi_{n+1})),\nonumber\\
    b_{n+1} &= \left(\pi(\widehat\theta_{n}^T\Phi_{n+1}) - Y_{n+1}\right)^2,\nonumber\\
    c_{n+1} &= \alpha a_{n+1} + \beta b_{n+1},\label{hsna_c_n}\\
    S_{n+1}^{-1} &= S_{n}^{-1} -c_{n+1}(1+c_{n+1}\Phi ^T_{n+1}S_{n}^{-1}\Phi_{n+1})^{-1}S_{n}^{-1}\Phi_{n+1}\Phi_{n+1}^TS_{n}^{-1},\label{hsna_S_n_inv}\\
    \widehat\theta_{n+1} &= \widehat\theta_n- S_{n+1}^{-1}\Phi_{n+1}\left( \pi(\widehat\theta_{n}^T\Phi_{n+1}) - Y_{n+1} \right) \label{hsna},
\end{align}
where $\alpha$ and $\beta$ are two positive constants such that $\beta>0$ and the initial value $\widehat{\theta}_{0}$ is a bounded vector of $\R^d$ which can be arbitrarily chosen.
We take $S_0 =\I_d$ and observe that for all $n\geqslant 0$,
\begin{equation}\label{hsna_S_n}
    S_{n+1}=\sum_{k=1}^{n+1} c_k \Phi_k \Phi_k^T + \I_d= S_n + c_{n+1}\Phi_{n+1}\Phi_{n+1}^T.
\end{equation}
The first key idea underlying the hybrid stochastic Newton algorithm is that we incorporate two weighted components in the Hessian matrix estimation: one from the natural Hessian estimate and the second from the stochastic gradient information. The motivation comes from \eqref{general_sigma_logistic} which shows that both parts evaluated in $\theta$ are equal to the Hessian matrix $\nabla^2 G(\theta)$. This interesting relation is specific to the logistic regression problem. The remarkable advantage of this new formulation is also that we do not need anymore truncation to obtain the convergence guarantees.

Moreover, the another cornerstone in our new algorithm is that we update $S_{n+1}$ and $S_{n+1}^{-1}$ before updating $\widehat\theta_{n+1}$. Our motivation is to make use of the maximum information available for the estimation of the Hessian matrix $\nabla^2 G( \theta)$ at the current time $n+1$.  We wish to point out that in all the previous literature \cite{bercu2020efficient,boyer2023asymptotic,godichon2024online}, $\widehat\theta_{n+1}$ is updated before $S_{n+1}$ and $S_{n+1}^{-1}$. Hence, this is the second key point in our analysis and we will see below the advantages of that choice. In fact, our approach makes a significant difference in the convergence analysis, but raises other problems of measurability. However, we find how to properly overcome these limitations.

In the same vein as previous literature on the Newton algorithm \cite{bercu2020efficient,de2021stochastic,hazan2016introduction}, the inverse of the Hessian estimate is updated via the useful Sherman-Morrison-Woodbury inversion formula (see e.g. \cite[p.~19]{horn2012matrix}) applied here in \eqref{hsna_S_n_inv}. We recall that the classical analysis of the stochastic Newton methods is based on the study of the eigenvalues convergence of the Hessian matrix estimation. This will not be the case here at all and we adopt another strategy. Therefore, we make extensive use of this recursive form \eqref{hsna_S_n_inv} from which we establish new interesting equations for the convergence of estimates. Finally, notice that our hybrid stochastic Newton algorithm with $\alpha=0$ and $\beta=1$, leads to the online Newton step.
Furthermore, we consider the positive definite matrices defined, for all $n\geqslant 1$, by
\begin{equation*}
    H_n=\sum_{k=1}^n a_k \Phi_k \Phi_k^T \qquad \text{and} \qquad
    \Sigma_n=\sum_{k=1}^n b_k \Phi_k \Phi_k^T,
\end{equation*}
as well as their averaged versions given by,
\begin{equation*}
    \overline{H}_n=\dfrac{1}{n}H_n, \qquad 
    \overline{\Sigma}_n=\dfrac{1}{n}\Sigma_n \qquad \text{and} \qquad
    \overline{S}_n=\dfrac{1}{n}S_n.
\end{equation*}
%$\G_{n} = \sigma\left( (\Phi_1,Y_1),(\Phi_2,Y_2),\dots,(\Phi_n,Y_n), \Phi_{n+1}\right) = \sigma(\mathcal{F}_n,\Phi_{n+1})$
We will prove that these three matrices $\overline{H}_n$, $\overline{\Sigma}_n$ and $\overline{S}_n$ converge almost surely to the Hessian matrix $\nabla^2 G(\theta)$. Throughout the paper, we will use the filtration $(\F_n)$ given, for all $n\geqslant 1$, by $\F_{n} = \sigma\left( (\Phi_1,Y_1),(\Phi_2,Y_2),\dots,(\Phi_n,Y_n)\right)$.
That choice is a quite of importance and we will see in our proofs how to overcome the measurability problems due to $S_{n+1}$ in the algorithm update \eqref{hsna}. Hereafter, our main assumptions are as follows.
\begin{assumption}\label{sna_cond1}
The Hessian matrix $\nabla^2 G(\theta)$ and the matrix $\E[\Phi\Phi^T]$ are positive definite.
\end{assumption}
\begin{assumption}\label{sna_cond2}
It exists a positive constant $d_\Phi$ such that $\lVert \Phi\rVert \leqslant d_\Phi$ almost surely.
\end{assumption}
Assumption \ref{sna_cond1} is a classical regularity hypothesis in the literature \cite{bercu2020efficient,boyer2023asymptotic}. Assumption \ref{sna_cond2} requires that the vector $\Phi$ is bounded. This condition seems rather restrictive, but it is made in most of all papers dealing with stochastic Newton algorithm for logistic regression \cite{bach2010self,de2021stochastic,wintenberger2024stochastic} and is satisfied in statistical applications.

\section{Main results}\label{sec:main_results}
This section is devoted to the main results of the paper. On the one hand, we establish the almost sure convergence of our hybrid stochastic Newton algorithm in Theorem \ref{hsna_th_convps}. We next present the almost sure rates of convergence in Theorem \ref{hsna_th_rates} and discuss their improvement in view of the current state of the art. On the other hand, we deal with the central limit theorem of our hybrid stochastic Newton algorithm in Theorem \ref{hsna_th_clt}.
\begin{theorem}\label{hsna_th_convps}
Assume that Assumptions \ref{sna_cond1} and \ref{sna_cond2} hold. Then, we have the following almost sure convergences
\begin{align}
    \lim_{n\to +\infty} \widehat{\theta}_n&=\theta \qquad \text{a.s.,}\label{hsna_th_convps_res1}\\
    \lim_{n\to +\infty} \overline{H}_n=\lim_{n\to +\infty}& \overline{\Sigma}_n=\nabla^2 G(\theta) \qquad \text{a.s.,} \label{hsna_th_convps_res2}\\
    \lim_{n\to +\infty} \overline{S}_n&=(\alpha+\beta)\nabla^2 G(\theta) \qquad \text{a.s.} \label{hsna_th_convps_res3}
\end{align}
\end{theorem}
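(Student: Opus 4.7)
The plan is to prove the parameter consistency \eqref{hsna_th_convps_res1} first, since the matrix convergences \eqref{hsna_th_convps_res2} and \eqref{hsna_th_convps_res3} will follow from it by rather standard strong-law-of-large-numbers arguments. Setting $\Delta_n=\widehat{\theta}_n-\theta$, I would work with the Lyapunov functional $V_n=\Delta_n^T S_n\Delta_n$. Injecting the update \eqref{hsna} and expanding via $S_{n+1}=S_n+c_{n+1}\Phi_{n+1}\Phi_{n+1}^T$ from \eqref{hsna_S_n}, one obtains
\begin{equation*}
V_{n+1} - V_n = c_{n+1}(\Phi_{n+1}^T \Delta_n)^2 - 2 \Delta_n^T \Phi_{n+1}\bigl(\pi(\widehat{\theta}_n^T\Phi_{n+1}) - Y_{n+1}\bigr) + b_{n+1}\Phi_{n+1}^T S_{n+1}^{-1}\Phi_{n+1}.
\end{equation*}
The cross term decomposes into a monotone non-positive contribution $-2\pi'(\xi_{n+1})(\Phi_{n+1}^T\Delta_n)^2$, obtained from the mean-value theorem applied to the strictly increasing function $\pi$ with $\xi_{n+1}$ between $\theta^T\Phi_{n+1}$ and $\widehat{\theta}_n^T\Phi_{n+1}$, and a centered noise $2\Delta_n^T\Phi_{n+1}[\pi(\theta^T\Phi_{n+1})-Y_{n+1}]$ with vanishing conditional expectation given $\sigma(\F_n,\Phi_{n+1})$.

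The main obstacle of the proof, explicitly flagged by the authors in Section~\ref{sec:framework}, is that $b_{n+1}$, and hence $c_{n+1}$ and $S_{n+1}$, depend on $Y_{n+1}$, so the identity above mixes $Y_{n+1}$ in a non-martingale way. I would circumvent this by working with the enlarged $\sigma$-field $\G_{n+1}=\sigma(\F_n,\Phi_{n+1})$ and computing conditional expectations first over $Y_{n+1}\sim\mathcal{B}(\pi(\theta^T\Phi_{n+1}))$, then over $\Phi_{n+1}$. The positive residual $b_{n+1}\Phi_{n+1}^T S_{n+1}^{-1}\Phi_{n+1}$ is controlled using $b_{n+1}\leq c_{n+1}/\beta$ together with the classical log-determinant identity $c_k\Phi_k^T S_k^{-1}\Phi_k = 1-\det S_{k-1}/\det S_k$ and the inequality $\log(1+x)\geq x/(1+x)$, which yield $\sum_{k=1}^n c_k\Phi_k^T S_k^{-1}\Phi_k \leq \log\det S_n=O(\log n)$ almost surely under Assumption~\ref{sna_cond2}. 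A Robbins--Siegmund argument then delivers the a.s.\ convergence of $V_n$ to a finite limit together with the finiteness of $\sum_{n\geq 1}\pi'(\xi_n)(\Phi_n^T\Delta_{n-1})^2$. Since $\lambda_{\min}(S_n)\geq 1$, the convergence of $V_n$ forces $(\widehat{\theta}_n)$ to be bounded a.s.; Assumption~\ref{sna_cond2} then supplies a positive random lower bound on $a_n=\pi'(\widehat{\theta}_{n-1}^T\Phi_n)$ for $n$ large, and Kolmogorov's SLLN applied to the iid bounded matrices $\Phi_n\Phi_n^T$ (whose mean is positive definite by Assumption~\ref{sna_cond1}) gives $\liminf_n \lambda_{\min}(\overline S_n)>0$ a.s. The estimate $\|\Delta_n\|^2\leq V_n/\lambda_{\min}(S_n)$ then delivers \eqref{hsna_th_convps_res1}.

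The two matrix convergences are then routine. For \eqref{hsna_th_convps_res2}, I would write
\begin{equation*}
\overline H_n = \frac{1}{n}\sum_{k=1}^n \pi'(\theta^T\Phi_k)\Phi_k\Phi_k^T + \frac{1}{n}\sum_{k=1}^n\bigl[\pi'(\widehat{\theta}_{k-1}^T\Phi_k) - \pi'(\theta^T\Phi_k)\bigr]\Phi_k\Phi_k^T,
\end{equation*}
and handle the first sum by Kolmogorov's SLLN for iid bounded random matrices, the second by a Cesàro argument relying on the Lipschitz continuity of $\pi'$, Assumption~\ref{sna_cond2} and \eqref{hsna_th_convps_res1}. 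The same scheme applies to $\overline\Sigma_n$ after first peeling off the martingale increments $b_k\Phi_k\Phi_k^T-\E[b_k\Phi_k\Phi_k^T\mid\F_{k-1},\Phi_k]$ via the SLLN for martingales with bounded increments; the conditional mean is $[(p_k-p_k^*)^2+p_k^*(1-p_k^*)]\Phi_k\Phi_k^T$ with $p_k=\pi(\widehat{\theta}_{k-1}^T\Phi_k)$ and $p_k^*=\pi(\theta^T\Phi_k)$, so that the squared-bias contribution vanishes by \eqref{hsna_th_convps_res1} while the remaining iid term converges to $\nabla^2 G(\theta)$ by \eqref{general_sigma_logistic}. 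Finally, \eqref{hsna_th_convps_res3} is immediate from $\overline S_n = n^{-1}\I_d + \alpha\overline H_n + \beta\overline\Sigma_n$.
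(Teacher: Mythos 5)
Your identity for $V_{n+1}-V_n$ with $V_n=(\widehat{\theta}_n-\theta)^TS_n(\widehat{\theta}_n-\theta)$ is algebraically correct, and your treatment of the matrix limits \eqref{hsna_th_convps_res2} and \eqref{hsna_th_convps_res3} is essentially the paper's. But the consistency argument has a genuine gap: the positive term $c_{n+1}(\Phi_{n+1}^T\Delta_n)^2$ (with your notation $\Delta_n=\widehat{\theta}_n-\theta$) is never controlled. Your only negative contribution is the first-order convexity term $-2\pi'(\xi_{n+1})(\Phi_{n+1}^T\Delta_n)^2$, so the net quadratic contribution is $\bigl(c_{n+1}-2\pi'(\xi_{n+1})\bigr)(\Phi_{n+1}^T\Delta_n)^2$ and you would need $c_{n+1}\leqslant 2\pi'(\xi_{n+1})$. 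By the mean value theorem $\pi'(\xi_{n+1})=\bigl(\pi(\widehat{\theta}_n^T\Phi_{n+1})-\pi(\theta^T\Phi_{n+1})\bigr)/\bigl((\widehat{\theta}_n-\theta)^T\Phi_{n+1}\bigr)$, which tends to $0$ when $\widehat{\theta}_n^T\Phi_{n+1}$ is large, while $c_{n+1}\geqslant\beta b_{n+1}$ can remain of order $\beta$ (take $\pi(\widehat{\theta}_n^T\Phi_{n+1})$ close to $1$ and $Y_{n+1}=0$). Since boundedness of $(\widehat{\theta}_n)$ is precisely what you are trying to establish, this term can be positive and of order $\beta(\Phi_{n+1}^T\Delta_n)^2$, neither summable nor absorbed; it is exactly the place where the ONS analysis invokes exp-concavity on a compact set of known diameter, i.e.\ a projection step that this algorithm deliberately does not have. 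A second, related problem: your residual satisfies only $\sum_{k\leqslant n} b_k\Phi_k^TS_k^{-1}\Phi_k=\mathcal{O}(\log n)$, not $\mathcal{O}(1)$, so Robbins--Siegmund (which requires a summable perturbation) does not deliver the a.s.\ convergence of $V_n$ to a finite limit; at best one gets $V_n=\mathcal{O}(\log n)$, which does not bootstrap the boundedness of $\widehat{\theta}_n$ since $\lambda_{\min}(S_n)\geqslant 1$ is all you have at that stage.

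The paper sidesteps both difficulties by changing the Lyapunov function: it works with the excess risk $T_n=G(\widehat{\theta}_n)-G(\theta)$. There the first-order term is made nonpositive through a sign-based case analysis combined with the key identity \eqref{hsna_conv_ps_eq6} (this is also how the lack of $\F_n$-measurability of $S_{n+1}^{-1}$ is handled, rather than by conditioning on $\sigma(\F_n,\Phi_{n+1})$), and the second-order term involves $\Phi_{n+1}^TS_{n+1}^{-2}\Phi_{n+1}$ --- two inverse powers --- which telescopes against $\Tr[S_{n}^{-1}]$ via \eqref{hsna_conv_ps_eq18} and is therefore genuinely summable, yielding a convergent positive martingale and then Robbins--Siegmund. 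That summable second-order control is the ingredient your plan is missing; any repair along your lines would need a quantity whose perturbations are $\mathcal{O}(1)$ rather than $\mathcal{O}(\log n)$, and a mechanism replacing exp-concavity to dominate $c_{n+1}(\Phi_{n+1}^T\Delta_n)^2$.
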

\begin{proof}
The proof of Theorem \ref{hsna_th_convps} can be found in Appendix \ref{app_hsna_th_convps}.
\end{proof}

Hereafter, we state the almost sure rates convergence of our hybrid stochastic Newton algorithm.
\begin{theorem}\label{hsna_th_rates}
Assume that Assumptions \ref{sna_cond1} and \ref{sna_cond2} are satisfied. Then, we have the almost sure rate of convergence,
\begin{equation}\label{hsna_th_rates_res1}
    \lVert \widehat{\theta}_n-\theta \rVert^2 =\mathcal{O}\left( \dfrac{\log(n)}{n} \right) \qquad \text{a.s.}
\end{equation}
Moreover, we also have
\begin{equation}\label{hsna_th_rates_res2}
    \lVert \overline{S}_{n} -(\alpha+\beta)\nabla^2 G(\theta) \rVert^2= \mathcal{O}\left( \dfrac{\log(n)}{n} \right) \qquad \text{a.s.,}
\end{equation}
and 
\begin{equation} \label{hsna_th_rates_res3}
    \left\lVert \overline{S}_{n}^{-1} -(\alpha+\beta)^{-1}\left(\nabla^2 G(\theta)\right)^{-1} \right\rVert^2= \mathcal{O}\left( \dfrac{\log(n)}{n} \right) \qquad \text{a.s.}
\end{equation}
\end{theorem}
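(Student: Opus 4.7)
The plan is to establish the three rates in order, leveraging Theorem \ref{hsna_th_convps} throughout. Since $\overline{S}_n \to (\alpha+\beta)\nabla^2 G(\theta)$ almost surely and the limit is positive definite, the smallest eigenvalue of $S_n$ is bounded below by $cn$ for some $c>0$ and all $n$ large enough. Consequently $\|\widehat\theta_n-\theta\|^2 \leq V_n/\lambda_{\min}(S_n) \lesssim V_n/n$ where $V_n := (\widehat\theta_n-\theta)^T S_n(\widehat\theta_n-\theta)$, and it is enough to prove $V_n = \mathcal{O}(\log n)$ a.s. This weighted Lyapunov is the natural object here because it interacts cleanly with the update $\widehat\theta_{n+1}-\theta = \delta_n - S_{n+1}^{-1}\Phi_{n+1}\epsilon_{n+1}$ and with $S_{n+1}=S_n+c_{n+1}\Phi_{n+1}\Phi_{n+1}^T$.

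Expanding $V_{n+1}$ and using the two identities above gives
\[
V_{n+1}-V_n = c_{n+1}(\Phi_{n+1}^T\delta_n)^2 - 2\epsilon_{n+1}\Phi_{n+1}^T\delta_n + b_{n+1}\Phi_{n+1}^T S_{n+1}^{-1}\Phi_{n+1}.
\]
Splitting $\epsilon_{n+1}=g_{n+1}+M_{n+1}$ where $g_{n+1}=\pi(\widehat\theta_n^T\Phi_{n+1})-\pi(\theta^T\Phi_{n+1})=\tilde a_{n+1}\Phi_{n+1}^T\delta_n$ by the mean value theorem and $M_{n+1}=\pi(\theta^T\Phi_{n+1})-Y_{n+1}$ is a bounded martingale increment conditionally on $\sigma(\F_n,\Phi_{n+1})$. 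The key telescoping tool is the matrix determinant lemma: $\log\det S_{n+1}-\log\det S_n=\log(1+c_{n+1}\Phi_{n+1}^T S_n^{-1}\Phi_{n+1})\geq c_{n+1}\Phi_{n+1}^T S_{n+1}^{-1}\Phi_{n+1}$. Combined with $c_{n+1}\geq\beta b_{n+1}$ and the a.s.\ bound $\|S_n\|=\mathcal{O}(n)$, this yields $\sum_{k=1}^n b_k\Phi_k^T S_k^{-1}\Phi_k = \mathcal{O}(\log n)$ a.s. The martingale sum $\sum_k M_k\Phi_k^T\delta_{k-1}$ is handled by the strong law for square-integrable martingales, with its conditional quadratic variation majorized by $\tfrac14\sum_k(\Phi_k^T\delta_{k-1})^2$, which can itself be re-absorbed into $V_n$ using that $(\Phi_k^T\delta_{k-1})^2\lesssim V_{k-1}\Phi_k^T S_{k-1}^{-1}\Phi_k$. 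Putting these pieces together in a Robbins--Siegmund style inequality delivers $V_n=\mathcal{O}(\log n)$ a.s.\ and hence \eqref{hsna_th_rates_res1}.

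For \eqref{hsna_th_rates_res2}, write $\overline S_n = \alpha\overline H_n+\beta\overline\Sigma_n + n^{-1}\I_d$ and bound $\overline H_n-\nabla^2 G(\theta)$ and $\overline\Sigma_n-\nabla^2 G(\theta)$ separately. Each decomposes into a \emph{statistical} part — the Cesàro average of i.i.d.\ bounded centered matrices evaluated at the true $\theta$, for which the Hartman--Wintner law of the iterated logarithm gives the $o(\sqrt{\log n/n})$ rate — and a \emph{plug-in} part obtained by replacing $\theta$ with $\widehat\theta_{k-1}$. Lipschitz continuity of $x\mapsto\pi(x)(1-\pi(x))$ and $x\mapsto(\pi(x)-y)^2$, together with $\|\Phi\|\leq d_\Phi$, makes the latter controlled by $n^{-1}\sum_{k\leq n}\|\widehat\theta_{k-1}-\theta\|$, and invoking \eqref{hsna_th_rates_res1} and summation by parts yields the desired $\mathcal{O}(\sqrt{\log n/n})$ operator-norm bound. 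Squaring gives \eqref{hsna_th_rates_res2}. For \eqref{hsna_th_rates_res3}, the resolvent identity $A^{-1}-B^{-1}=-A^{-1}(A-B)B^{-1}$ with $A=\overline S_n$ and $B=(\alpha+\beta)\nabla^2 G(\theta)$, combined with the fact that $\|\overline S_n^{-1}\|$ is eventually bounded above by $2\|B^{-1}\|$ a.s.\ (a consequence of Theorem \ref{hsna_th_convps}), immediately transfers the rate in \eqref{hsna_th_rates_res2} to \eqref{hsna_th_rates_res3}.

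The main obstacle lies in the first step. The sign of $c_{n+1}-2\tilde a_{n+1}$ is not controlled uniformly — in particular when $\alpha+\beta>2$ no pointwise contraction is available — and $S_{n+1}$ is not $\F_n$-measurable because $b_{n+1}$ depends on $Y_{n+1}$. Circumventing both issues requires conditioning first on $\sigma(\F_n,\Phi_{n+1})$ to handle the Bernoulli noise, and then exploiting the determinant-lemma telescoping to ensure that every positive contribution to $V_n$ remains $\mathcal{O}(\log n)$, rather than trying to force a step-by-step decrease.
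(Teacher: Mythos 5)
Your treatment of \eqref{hsna_th_rates_res2} and \eqref{hsna_th_rates_res3} is essentially sound and close in spirit to the paper's own argument (a centered martingale/i.i.d.\ part plus a plug-in part bounded by $n^{-1}\sum_{k\leqslant n}\lVert\widehat\theta_{k-1}-\theta\rVert$, then the resolvent identity), but both steps feed off \eqref{hsna_th_rates_res1}, and that is where your argument has a genuine gap. Writing $\delta_n=\widehat\theta_n-\theta$ as you do, the expansion of $V_{n+1}$ is correct and the determinant-lemma telescoping does give $\sum_{k\leqslant n} b_k\Phi_k^TS_k^{-1}\Phi_k=\mathcal{O}(\log n)$ a.s.; the problem is the drift term $(c_{n+1}-2\tilde a_{n+1})(\Phi_{n+1}^T\delta_n)^2$. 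It carries no $S^{-1}$ weighting, so the determinant lemma says nothing about it, and the fix you sketch does not close. Conditioning on $\sigma(\F_n,\Phi_{n+1})$ gives $\E[c_{n+1}\mid\F_n,\Phi_{n+1}]=\alpha a_{n+1}+\beta\,\pi(\theta^T\Phi_{n+1})(1-\pi(\theta^T\Phi_{n+1}))+\beta\tilde a_{n+1}^2(\Phi_{n+1}^T\delta_n)^2$, so the conditional drift coefficient is $(\alpha+\beta-2)\tilde a_{n+1}+\mathcal{O}(\lVert\delta_n\rVert)$, which is nonnegative and bounded away from zero whenever $\alpha+\beta>2$ --- a case the theorem covers, since only positivity of $\alpha$ and $\beta$ is assumed. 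Re-absorbing the positive drift via $(\Phi_k^T\delta_{k-1})^2\leqslant V_{k-1}\Phi_k^TS_{k-1}^{-1}\Phi_k$ places a coefficient in front of $V_{k-1}$ whose partial sums grow like $\log n$: Robbins--Siegmund requires summable coefficients, and a discrete Gronwall bound only yields $V_n=\exp(\mathcal{O}(\log n))=n^{\mathcal{O}(1)}$, which is useless. The crude bound $(\Phi_k^T\delta_{k-1})^2\leqslant d_\Phi^2\lVert\delta_{k-1}\rVert^2$ is circular, since summability of $\lVert\delta_k\rVert^2$ at the right rate is what you are trying to prove. This is exactly the exp-concavity obstruction from the ONS literature, and it is precisely what the paper's construction is designed to avoid.

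The paper takes a different route: it solves the recursion in closed form, $\widehat\theta_{n+1}-\theta=-\frac{1}{n+1}M_{n+1}-\Delta_n$ with $M_{n+1}=\sum_{k\leqslant n}\overline S_k^{-1}\varepsilon_{k+1}$ a genuine martingale (the index shift from $\overline S_{k+1}^{-1}$ to $\overline S_k^{-1}$ is what resolves the measurability issue you mention), gets $\lVert M_n\rVert^2=\mathcal{O}(n\log n)$ from the martingale strong law since $\frac1n\langle M\rangle_n$ converges, and controls $\Delta_n$ through a contraction inequality for the Ces\`aro average $L_n=\frac1n\sum_{k\leqslant n}\lVert\widehat\theta_k-\theta\rVert$, exploiting that the coefficient multiplying $L_n$ can be made strictly less than $1/2$ once $\overline S_k^{-1}$ is close to its limit. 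No sign condition on $c_{n+1}-2\tilde a_{n+1}$ and no restriction on $\alpha+\beta$ is needed. To salvage your Lyapunov route you would at minimum have to restrict to $\alpha+\beta<2$ and still supply a non-circular bound on $\sum_k(\Phi_k^T\delta_{k-1})^2$; as written, \eqref{hsna_th_rates_res1} is not established.
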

\begin{proof}
The proof of Theorem \ref{hsna_th_rates} can be found in Appendix \ref{app_hsna_th_rates}.
\end{proof}
We point out that our Theorem \ref{hsna_th_rates} significantly improves the rate of convergence to the Hessian matrix found in Theorem 4.3 of \cite{bercu2020efficient} which was of order $\mathcal{O}(1/n^\gamma)$ with $0<\gamma<1$ . The reason is that we do not use any truncation which involves a weighting slowing down the method. Therefore, our algorithm leads to a better estimation of the Hessian matrix.
Lastly, the central limit theorem for our hybrid stochastic Newton algorithm is as follows.
\begin{theorem}\label{hsna_th_clt}
Assume that Assumptions \ref{sna_cond1} and \ref{sna_cond2} are satisfied and that $\alpha+\beta=1$ with $\beta>0$. Then, we have the central limit theorem
\begin{equation}\label{hsna_th_clt_res1}
    \sqrt{n} (\widehat{\theta}_n-\theta ) \quad \overset{\mathcal{L}}{\underset{n\to +\infty}{\longrightarrow}} \quad \mathcal{N}_d\left(0,\left(\nabla^2 G(\theta)\right)^{-1}\right).
\end{equation}
\end{theorem}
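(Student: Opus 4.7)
The plan is to establish a martingale CLT for the auxiliary process $W_n = S_n(\widehat\theta_n - \theta)$, namely that $W_n/\sqrt n \xrightarrow{\mathcal L} \mathcal N_d(0,\nabla^2 G(\theta))$, and then to transfer this to $\widehat\theta_n - \theta$ via Slutsky's lemma. Since Theorem \ref{hsna_th_convps} already gives $\overline S_n \to (\alpha+\beta)\nabla^2 G(\theta) = \nabla^2 G(\theta)$ almost surely and $\sqrt n(\widehat\theta_n - \theta) = \overline S_n^{-1}(W_n/\sqrt n)$, this will yield the target covariance $(\nabla^2 G(\theta))^{-1}$. Working with $W_n$ rather than with $\widehat\theta_n$ directly resolves the measurability difficulty highlighted in Section \ref{sec:framework}: $S_n$ (unlike $S_{n+1}$) is $\F_n$-measurable, and multiplying the update \eqref{hsna} by $S_{n+1}$ together with $S_{n+1} = S_n + c_{n+1}\Phi_{n+1}\Phi_{n+1}^T$ leads to a one-step recursion in which the new information $Y_{n+1}$ enters only linearly.

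To spell this recursion out, I would set $\Delta_n = \widehat\theta_n - \theta$, $\epsilon_{n+1} = Y_{n+1} - \pi(\theta^T\Phi_{n+1})$, and, via the mean-value theorem, write $\pi(\widehat\theta_n^T\Phi_{n+1}) - \pi(\theta^T\Phi_{n+1}) = \tilde a_{n+1}\Phi_{n+1}^T\Delta_n$ with $\tilde a_{n+1} = \pi'(\xi_{n+1}) \in [0,1/4]$ and $\xi_{n+1}$ an intermediate point. The resulting recursion is
\[W_{n+1} = W_n + \Phi_{n+1}\epsilon_{n+1} + (c_{n+1} - \tilde a_{n+1})\Phi_{n+1}\Phi_{n+1}^T \Delta_n,\]
with principal term $M_n = \sum_{k=1}^n \Phi_k \epsilon_k$ and remainder $R_n = \sum_{k=1}^n (c_k - \tilde a_k)\Phi_k\Phi_k^T \Delta_{k-1}$. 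The process $M_n$ is an $\F_n$-martingale, and by independence of $\Phi_k$ from $\F_{k-1}$ combined with \eqref{general_sigma_logistic},
\[\E[\Phi_k\Phi_k^T \epsilon_k^2 \mid \F_{k-1}] = \E[\Phi\Phi^T \pi'(\theta^T\Phi)] = \nabla^2 G(\theta)\]
is a deterministic constant; since $\|\Phi_k\epsilon_k\| \le d_\Phi$ by Assumption \ref{sna_cond2}, Lindeberg's condition is immediate, and the classical multivariate martingale CLT gives $M_n/\sqrt n \xrightarrow{\mathcal L} \mathcal N_d(0, \nabla^2 G(\theta))$.

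It then remains to show that $R_n = o(\sqrt n)$ almost surely. Here the assumption $\alpha+\beta = 1$ is decisive, because it allows the decomposition $c_k - \tilde a_k = \alpha(a_k - \tilde a_k) + \beta(b_k - \tilde a_k)$ with no leading constant bias: the Lipschitz character of $\pi'$ yields $|a_k - \tilde a_k| = O(\|\Phi_k\|\|\Delta_{k-1}\|)$, while expanding $b_k = (\tilde a_k\Phi_k^T\Delta_{k-1} - \epsilon_k)^2$ and using $\E[\epsilon_k^2 \mid \F_{k-1},\Phi_k] = \pi'(\theta^T\Phi_k)$ controls the conditional mean of $b_k - \tilde a_k$ to the same order. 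Splitting $R_n$ into its predictable compensator and its martingale part, both contribute at step $k$ with weight $O(\|\Delta_{k-1}\|^2)$, so the a.s. rate $\|\Delta_k\|^2 = O(\log k/k)$ from Theorem \ref{hsna_th_rates} gives drift $O(\log n)$ and, via a standard strong law for square-integrable martingales, martingale part $O(\sqrt{\log n \,\log\log n})$, both $o(\sqrt n)$.

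The hardest step is precisely this final bookkeeping, because the centred quantity $\epsilon_k^2 - \pi'(\theta^T\Phi_k)$ appearing inside $b_k - \tilde a_k$ is not pointwise small: it is a genuinely fluctuating martingale driver whose contribution must be isolated and controlled through its conditional variance rather than pathwise. The exact calibration $\alpha+\beta = 1$ is what kills the otherwise $O(\|\Delta_{k-1}\|)$ deterministic term that would prevent the sum from being of smaller order than $\sqrt n$; this is precisely why the central limit theorem is stated under this constraint.
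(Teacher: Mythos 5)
Your proof is correct, but it follows a genuinely different route from the paper's. The paper reuses the decomposition established for Theorem \ref{hsna_th_rates}, namely $\widehat\theta_{n+1}-\theta=-\frac{1}{n+1}M_{n+1}-\Delta_n$ with the martingale $M_{n+1}=\sum_{k\leqslant n}\overline S_k^{-1}\varepsilon_{k+1}$ and $\varepsilon_{k+1}=Z_{k+1}-\nabla G(\widehat\theta_k)$; it computes $\frac1n\langle M\rangle_n\to(\nabla^2G(\theta))^{-1}$ (this is where $\Sigma(\theta)=\nabla^2G(\theta)$ and $\alpha+\beta=1$ enter), checks Lindeberg via boundedness, and shows $\sqrt{n}\,\|\Delta_n\|\to0$ a.s.\ term by term using the rates already proved. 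You instead multiply the recursion through by $S_{n+1}$, obtaining $W_{n+1}=W_n+\Phi_{n+1}\epsilon_{n+1}+(c_{n+1}-\tilde a_{n+1})\Phi_{n+1}\Phi_{n+1}^T(\widehat\theta_n-\theta)$, whose martingale part $\sum_k\Phi_k\epsilon_k$ has exactly deterministic bracket $n\nabla^2G(\theta)$ --- a cleaner CLT input than the paper's increments $\overline S_k^{-1}\varepsilon_{k+1}$ --- and whose remainder you control by splitting $c_k-\tilde a_k=\alpha(a_k-\tilde a_k)+\beta(b_k-\tilde a_k)$ and treating the non-pointwise-small piece $\epsilon_k^2-\pi'(\theta^T\Phi_k)$ as a bounded martingale driver with conditional variance $O(\|\widehat\theta_{k-1}-\theta\|^2)$. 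This works: every remainder contribution is $O(\log^2 n)$ or $O(\log n\sqrt{\log\log n})$, hence $o(\sqrt n)$ (your stated orders $O(\log n)$ and $O(\sqrt{\log n\,\log\log n})$ are slightly optimistic, since $\sum_k\log k/k\sim\tfrac12\log^2 n$, but this is immaterial). Both arguments lean on Theorem \ref{hsna_th_rates} only through $\sum_k\|\widehat\theta_{k-1}-\theta\|^2=O(\log^2 n)$. What your version buys: the inverse $S_{n+1}^{-1}$ disappears entirely, so the measurability issue is avoided rather than worked around, and the limiting bracket is exact rather than asymptotic; what the paper's version buys: it recycles at no extra cost the decomposition and bounds already built for the rate theorem. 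Your closing observation on the role of $\alpha+\beta=1$ is also accurate: without it, $c_k-\tilde a_k$ carries a nonvanishing multiple of $\tilde a_k$, producing a drift of order $\sum_k\|\widehat\theta_{k-1}-\theta\|=O(\sqrt{n\log n})$ that would not be $o(\sqrt n)$.
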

\begin{proof}
The proof of Theorem \ref{hsna_th_clt} can be found in Appendix \ref{app_hsna_th_clt}.
\end{proof}
Theorem \ref{hsna_th_clt} states that the asymptotic covariance matrix is the inverse of the Hessian of the function $G$ evaluated at $\theta$. Hence, this result remains consistent with the expected one which is that stochastic Newton methods are asymptotically efficient. Thus, the two weightings with $\alpha$ and $\beta$, introduced in the Hessian matrix estimation $S_{n+1}$ does not affect this efficiency once again thanks to the useful property $\Sigma (\theta) = \nabla^2 G(\theta)$ for the logistic regression. In addition, we deduce from the central limit theorem \eqref{hsna_th_clt_res1} and a careful analysis of the remainder term in Taylor expansion of $G(\widehat{\theta}_n)$, see \eqref{hsna_th_qsl_eq11b}, that
\begin{equation}
    n \Big(G(\widehat{\theta}_n)-G(\theta)\Big) \quad \overset{\mathcal{L}}{\underset{n\to +\infty}{\longrightarrow}} \quad \dfrac{1}{2}\chi^2(d),
\end{equation}
where $\chi^2(d)$ stands for the chi-squared distribution with $d$ degrees of freedom.
\par
Finally, we propose a quadratic strong law for our hybrid stochastic Newton which implies a surprising result on the almost sure convergence of the cumulative excess risk. They require no more assumptions than those we have already made and are stated as follows.
\begin{theorem}\label{hsna_th_qsl}
Assume that Assumptions \ref{sna_cond1} and \ref{sna_cond2} are satisfied and that $\alpha+\beta=1$ with $\beta>0$. Then, we have the quadratic strong law
\begin{equation}\label{hsna_th_qsl_res1}
    \lim_{n\to +\infty} \dfrac{1}{\log(n)} \sum_{k=1}^n (\widehat{\theta}_k-\theta )(\widehat{\theta}_k-\theta )^T =\left(\nabla^2 G(\theta)\right)^{-1} \qquad \text{a.s.}
\end{equation}
%Consequently, 
%\begin{equation}\label{hsna_th_qsl_res2}
%    \lim_{n\to +\infty} \dfrac{1}{\log(n)} \sum_{k=1}^n \lVert \widehat{\theta}_k-\theta \rVert^2 =
%    \Tr \left[\left(\nabla^2 G(\theta)\right)^{-1}\right] \qquad \text{a.s.}
%\end{equation}
Moreover, we also have 
\begin{equation}\label{hsna_th_qsl_res2}
    \lim_{n\to +\infty} \dfrac{1}{\log(n)} \sum_{k=1}^n \Big(G(\widehat{\theta}_k)-G(\theta)\Big)  =
    \dfrac{d}{2} \qquad \text{a.s.}
\end{equation}
\end{theorem}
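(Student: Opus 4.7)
The plan is to combine the martingale decomposition used to prove the CLT (Theorem \ref{hsna_th_clt}) with a quadratic strong law for vector martingales, and then transfer the result to the excess risk by a Taylor expansion of $G$.

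Starting from the update \eqref{hsna}, a Taylor expansion of $\pi$ around $\theta^T \Phi_{n+1}$ together with the identity $S_{n+1} = S_n + c_{n+1}\Phi_{n+1}\Phi_{n+1}^T$ should yield the linearization
\[
S_{n+1}(\widehat{\theta}_{n+1}-\theta) = S_n(\widehat{\theta}_n-\theta) - \Phi_{n+1}\varepsilon_{n+1} + R_{n+1},
\]
where $\varepsilon_{n+1} = \pi(\theta^T\Phi_{n+1}) - Y_{n+1}$ is such that $M_{n+1} = \sum_{k=1}^{n+1}\Phi_k\varepsilon_k$ is an $(\F_n)$-martingale with predictable quadratic variation $\langle M\rangle_n \sim n\,\nabla^2 G(\theta)$ by \eqref{general_sigma_logistic}, and $R_{n+1}$ is a remainder of order $\|\widehat{\theta}_n-\theta\|^2 \|\Phi_{n+1}\|^3$. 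Iterating and using the almost sure rate \eqref{hsna_th_rates_res3} together with $\alpha+\beta=1$, I would derive
\[
\widehat{\theta}_n - \theta = -\frac{1}{n}\bigl(\nabla^2 G(\theta)\bigr)^{-1} M_n + \rho_n,
\]
with $\|\rho_n\|^2 = o(\log(n)/n)$ almost surely, thanks to \eqref{hsna_th_rates_res1}.

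Next, I would invoke a standard quadratic strong law for square-integrable martingales (see e.g.\ the Chaabane--Maaouia type results used for such stochastic algorithms), which gives
\[
\lim_{n\to+\infty}\frac{1}{\log n}\sum_{k=1}^n \frac{M_k M_k^T}{k^2} = \nabla^2 G(\theta) \qquad \text{a.s.}
\]
Conjugating by $(\nabla^2 G(\theta))^{-1}$ and controlling the cross terms between $-\frac{1}{k}(\nabla^2 G(\theta))^{-1} M_k$ and $\rho_k$ via Cauchy--Schwarz (the pure $\rho_k\rho_k^T$ sum is negligible since $\sum_k \|\rho_k\|^2/\log n \to 0$) then produces \eqref{hsna_th_qsl_res1}.

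For \eqref{hsna_th_qsl_res2}, I would write the second-order Taylor expansion
\[
G(\widehat{\theta}_k) - G(\theta) = \tfrac{1}{2}(\widehat{\theta}_k-\theta)^T \nabla^2 G(\theta)(\widehat{\theta}_k-\theta) + r_k,
\]
where $|r_k| = \mathcal{O}(\|\widehat{\theta}_k-\theta\|^3)$ by the boundedness of $\Phi$ (Assumption \ref{sna_cond2}) and the smoothness of $\pi$. The rate \eqref{hsna_th_rates_res1} yields $\|\widehat{\theta}_k-\theta\|^3 = \mathcal{O}((\log k/k)^{3/2})$, whose series converges, so $\sum_{k=1}^n r_k$ is bounded a.s.\ and therefore negligible compared to $\log n$. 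Using the trace identity
\[
\sum_{k=1}^n (\widehat{\theta}_k-\theta)^T \nabla^2 G(\theta)(\widehat{\theta}_k-\theta) = \Tr\left(\nabla^2 G(\theta) \sum_{k=1}^n (\widehat{\theta}_k-\theta)(\widehat{\theta}_k-\theta)^T\right),
\]
dividing by $\log n$ and applying \eqref{hsna_th_qsl_res1} gives the limit $\tfrac{1}{2}\Tr(I_d) = d/2$.

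The main obstacle I anticipate is the rigorous control of the remainder $R_{n+1}$ and of the cross terms between the dominant martingale part and $\rho_n$: because $S_{n+1}$ uses $\Phi_{n+1}$ and hence is not $\F_n$-measurable, one must carefully separate the predictable and the martingale-difference parts (exactly the measurability issue highlighted in Section \ref{sec:framework}). This is handled by replacing $S_{n+1}^{-1}$ by $S_n^{-1}$ via the Sherman--Morrison identity \eqref{hsna_S_n_inv} and absorbing the difference into a higher-order remainder, which is controlled uniformly with the help of Theorem \ref{hsna_th_rates}.
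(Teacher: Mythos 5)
Your overall strategy — martingale decomposition of $\widehat{\theta}_n-\theta$, a quadratic strong law for vector martingales applied to $\frac{1}{\log n}\sum_k k^{-2}M_kM_k^T$, Cauchy--Schwarz for the cross terms, and a second-order Taylor expansion with a cubic remainder for the excess risk — is the same as the paper's, and the second half of your argument (the trace identity and the summability of $\sum_k\|\widehat{\theta}_k-\theta\|^3$) matches the paper's proof essentially verbatim. However, there is a genuine gap in the first half: you claim $\|\rho_n\|^2=o(\log(n)/n)$ a.s.\ ``thanks to \eqref{hsna_th_rates_res1}'' and then assert that $\frac{1}{\log n}\sum_{k\leqslant n}\|\rho_k\|^2\to 0$. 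That implication is false: from $\|\rho_k\|^2=o(\log(k)/k)$ one only gets
\begin{equation*}
\sum_{k=1}^n \|\rho_k\|^2 = o\left(\sum_{k=1}^n \frac{\log(k)}{k}\right)=o\big(\log^2(n)\big),
\end{equation*}
which does not rule out a contribution of order $\log(n)$ (or even $\log^2(n)/\log\log(n)$) to the limit in \eqref{hsna_th_qsl_res1}. This is precisely the subtlety the paper flags: the rate $\|\Delta_n\|^2=\mathcal{O}(\log(n)/n)$ coming out of the proof of Theorem \ref{hsna_th_rates} is \emph{not} sufficient here, and one must sharpen it to $\|\Delta_n\|=\mathcal{O}(\log^2(n)/n)$, so that $\sum_k\|\Delta_k\|^2=\mathcal{O}(1)=o(\log n)$.

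The sharpening is obtained by revisiting each piece of the remainder with the rates of Theorem \ref{hsna_th_rates} already in hand: the Taylor remainder of the gradient satisfies the \emph{quadratic} bound $\|\delta_n\|\leqslant C\|\widehat{\theta}_n-\theta\|^2=\mathcal{O}(\log(n)/n)$ (you do state the analogous quadratic bound for your $R_{n+1}$, but you do not propagate it to $\rho_n$), and the term $\frac{1}{n}\sum_k(\overline{S}_k^{-1}-S^{-1})S(\widehat{\theta}_k-\theta)$ is a product of two factors each of size $\mathcal{O}(\sqrt{\log(k)/k})$ by \eqref{hsna_th_rates_res1} and \eqref{hsna_th_rates_res3}, hence $\mathcal{O}(\log(k)/k)$ termwise and $\mathcal{O}(\log^2(n)/n)$ after averaging. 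With $\|\rho_n\|=\mathcal{O}(\log^2(n)/n)$ the series $\sum_k\|\rho_k\|^2$ converges and your Cauchy--Schwarz argument for the cross terms goes through. As written, your proof of \eqref{hsna_th_qsl_res1} is incomplete at exactly this point; everything downstream (including \eqref{hsna_th_qsl_res2}) is fine once this is repaired. A minor additional caveat: your alternative linearization $S_{n+1}(\widehat{\theta}_{n+1}-\theta)=S_n(\widehat{\theta}_n-\theta)-\Phi_{n+1}\varepsilon_{n+1}+R_{n+1}$ with $\varepsilon_{n+1}=\pi(\theta^T\Phi_{n+1})-Y_{n+1}$ generates an extra term $(c_{n+1}-a_{n+1}(\theta))\Phi_{n+1}\Phi_{n+1}^T(\widehat{\theta}_n-\theta)$ whose fluctuating part is only $\mathcal{O}(\|\widehat{\theta}_n-\theta\|)$, not quadratic; it must be treated as a separate martingale transform rather than absorbed into $R_{n+1}$, which is an additional step your sketch does not address (the paper avoids it by keeping the martingale increments in the form $\overline{S}_k^{-1}\varepsilon_{k+1}$ with $\varepsilon_{k+1}=Z_{k+1}-\nabla G(\widehat{\theta}_k)$).
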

\begin{proof}
The proof of Theorem \ref{hsna_th_qsl} can be found in Appendix \ref{app_hsna_th_qsl}.
\end{proof}
Hazan \cite{hazan2007logarithmic} has shown similar logarithmic upper bound of the cumulative excess risk \eqref{hsna_th_qsl_res2} only for the deterministic version of the ONS algorithm under the exp-concavity assumption \cite{hazan2016introduction}. This condition is satisfied for the logistic regression problem as soon as the sequence $(\widehat{\theta}_n)$ belongs to a bounded convex set whose diameter must be known. Therefore, our result appears more precise by providing an exact limit of that cumulative excess risk.

\section{Simulations}\label{sec:experiments}
\vspace{-1cm} 
In this section, we investigate the behavior of several algorithms on different datasets for logistic regression. In the first subsection, we focus our attention on the synthetic dataset. The second one presents the results obtained on real dataset. The goal in both cases, is to compare the performances of five algorithms as baseline : the stochastic gradient descent (SGD), the stochastic Newton algorithm (SN), the truncated stochastic Newton algorithm (TSN), the online Newton step (ONS) and our hybrid stochastic Newton algorithm. 
\par
We highlight that in all situations, the TSN algorithm is running as recommended in \cite{bercu2020efficient}.
Moreover, we implement the SGD algorithm with the decreasing step-size $1/n$. We recall that the ONS algorithm is exactly our hybrid stochastic Newton algorithm in the special case $\alpha=0$ and $\beta=1$. In what follows, concerning our hybrid stochastic Newton algorithm, we place ourselves in the situation where $\alpha+\beta=1$ and we specify below the values used for the two parameters $\alpha$ and $\beta$ in each case.

\vspace{-1cm} 
\subsection{Synthetic data}
\vspace{-1cm} 

We illustrate the robustness of our hybrid stochastic Newton algorithm when the dimension $d$ increases with synthetic dataset. For that purpose, we test several dimensions $d$ belonging to $\{10,50,100,200\}$. The feature $\Phi$ is a random vector with $d$ independent coordinates identically and uniformly distributed on the interval $[0,1]$. Moreover, the true unknown parameter $\theta$ is chosen such that its components are independent with the same uniform discrete distribution on $[\![-10,10]\!]$. We consider this large range in order to have a model which is not well-conditioned. All algorithms have been initialized at the origin of $\R^d$. We summarize in Table \ref{synthetic_param} the values of $\alpha$ and $\beta$ for our hybrid stochastic Newton algorithm. 

\begin{table}[t]%[H]%
\caption{Values of $\alpha$ and $\beta$ used for each dimension}
\label{synthetic_param}
\begin{center}
\begin{tabular}{@{}ccccc@{}}
\hline
Dimension $d$  & 10 & 50 & 100 & 200 \\
\hline
$\alpha$    & $10^{-10}$ & $10^{-10}$ & 0.25  & 0.9  \\
$\beta$    & $1-10^{-10}$  & $1-10^{-10}$  & 0.75  & 0.1  \\
\hline
\end{tabular}
\end{center}
\end{table}

\noindent We evaluate the different algorithms with the mean squared error $\E[\lVert \widehat{\theta}_{n} -\theta \rVert^2]$ that we approximate by its empirical version on 100 samples. The results are displayed in Figure \ref{fig:all_mse_synthetic}.

\begin{figure}[H]%[t]
    \centering
    \begin{subfigure}[b]{0.49\textwidth}
        \includegraphics[width=\textwidth]{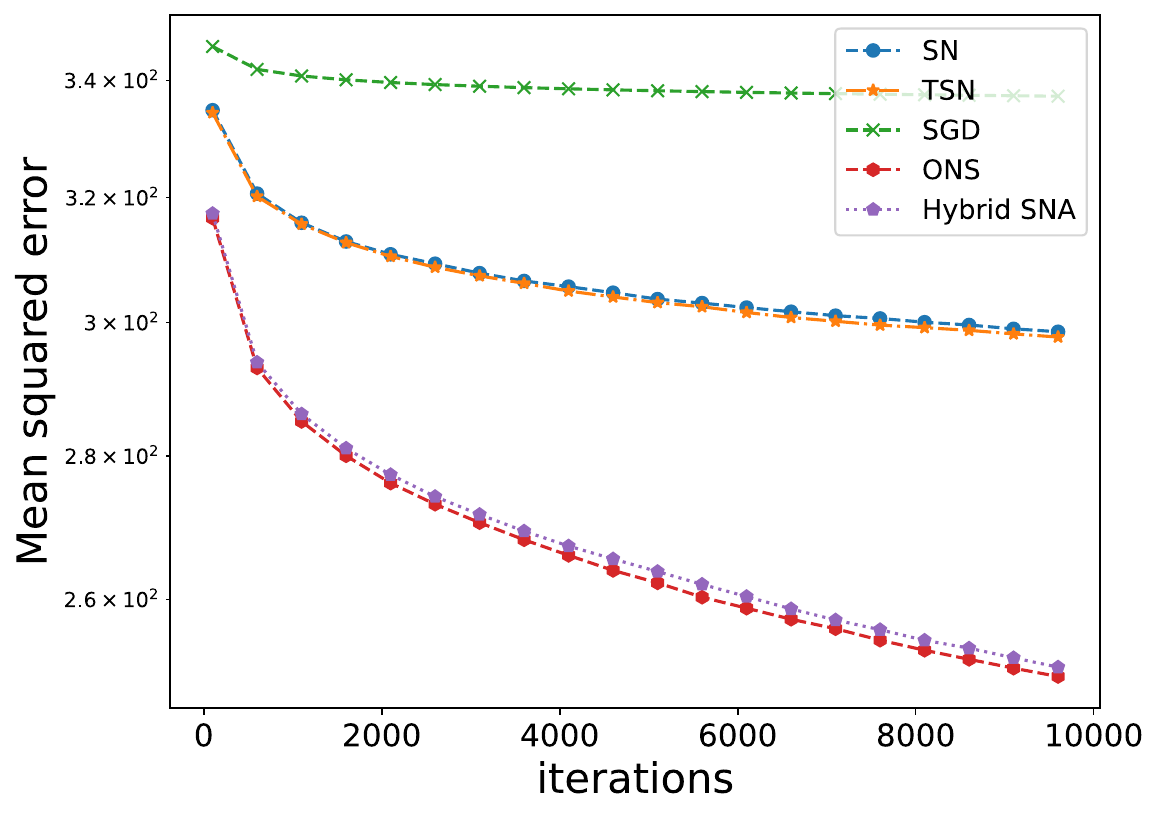}
        %\captionsetup{labelformat=empty}
        \caption{$d=10$}
    \end{subfigure}
    \hfill
    \begin{subfigure}[b]{0.49\textwidth}
        \includegraphics[width=\textwidth]{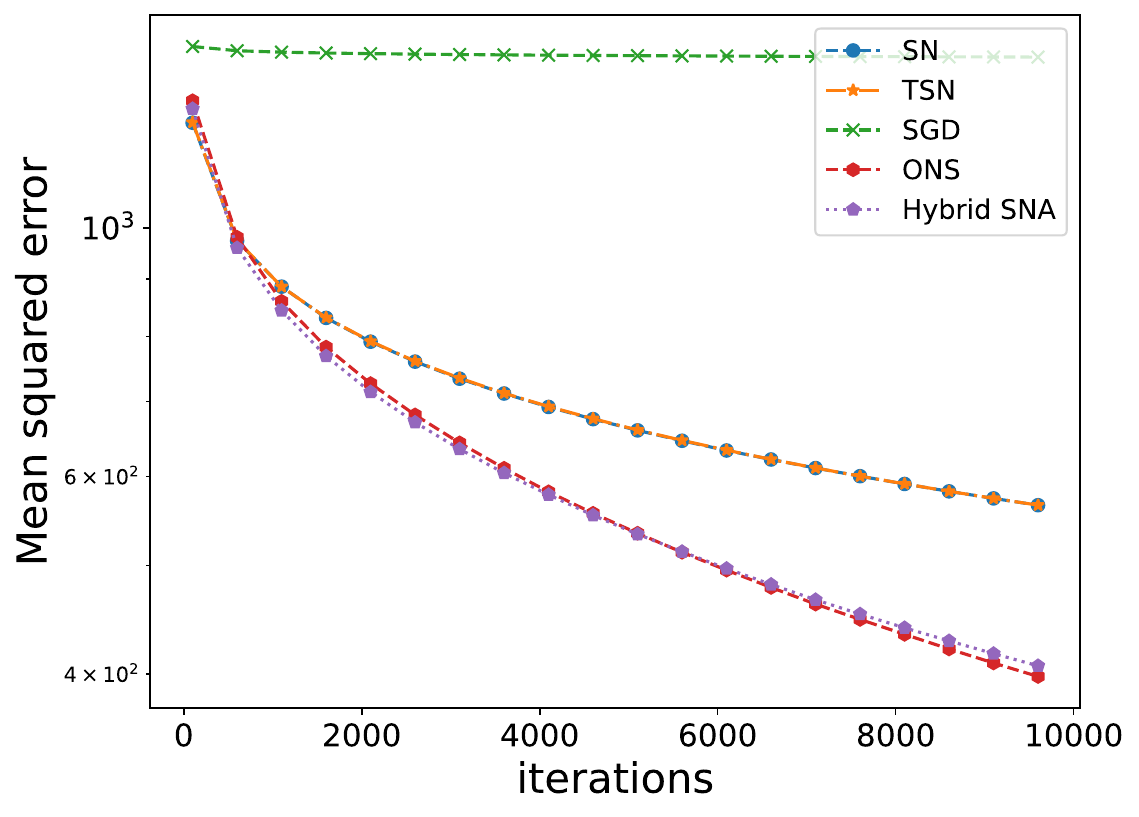}
        %\captionsetup{labelformat=empty}
        \caption{$d=50$}
    \end{subfigure}
    \hfill
    %\vspace{1cm}
    \\
    \begin{subfigure}[b]{0.49\textwidth}
        \includegraphics[width=\textwidth]{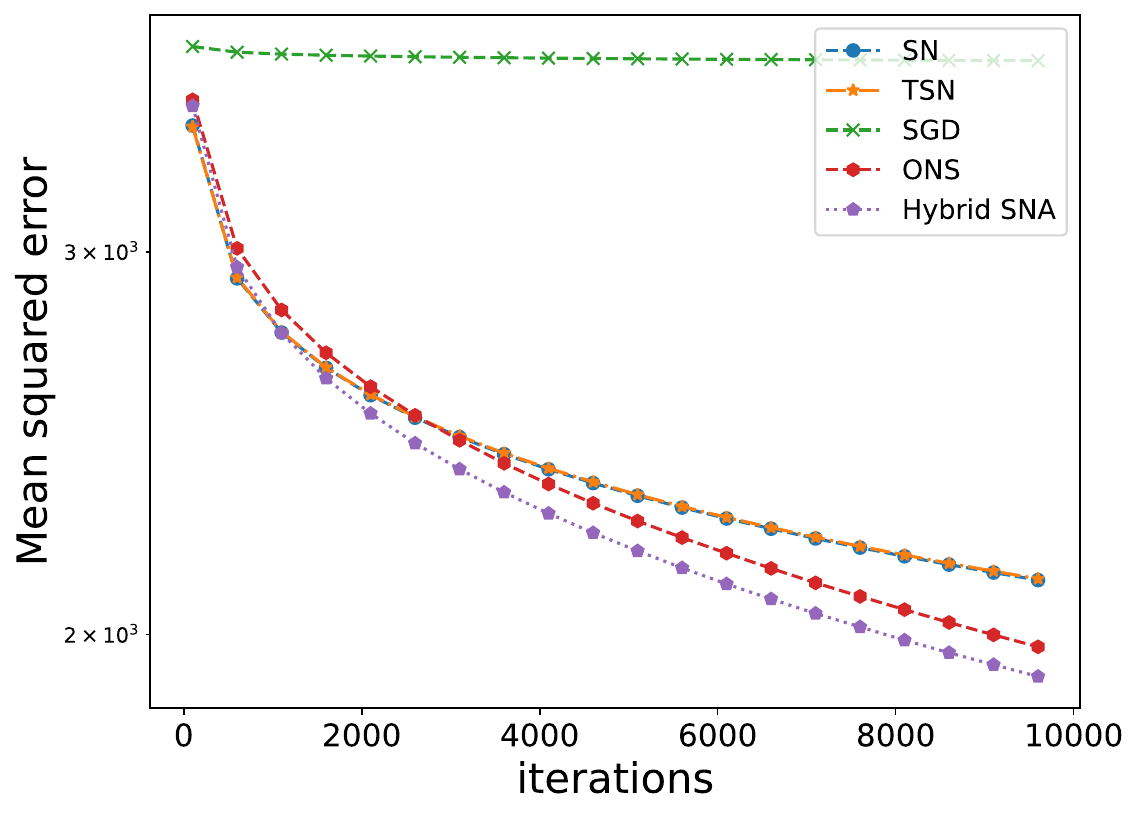}
        %\captionsetup{labelformat=empty}
        \caption{$d=100$}
    \end{subfigure}
    \hfill
    \begin{subfigure}[b]{0.49\textwidth}
        \includegraphics[width=\textwidth]{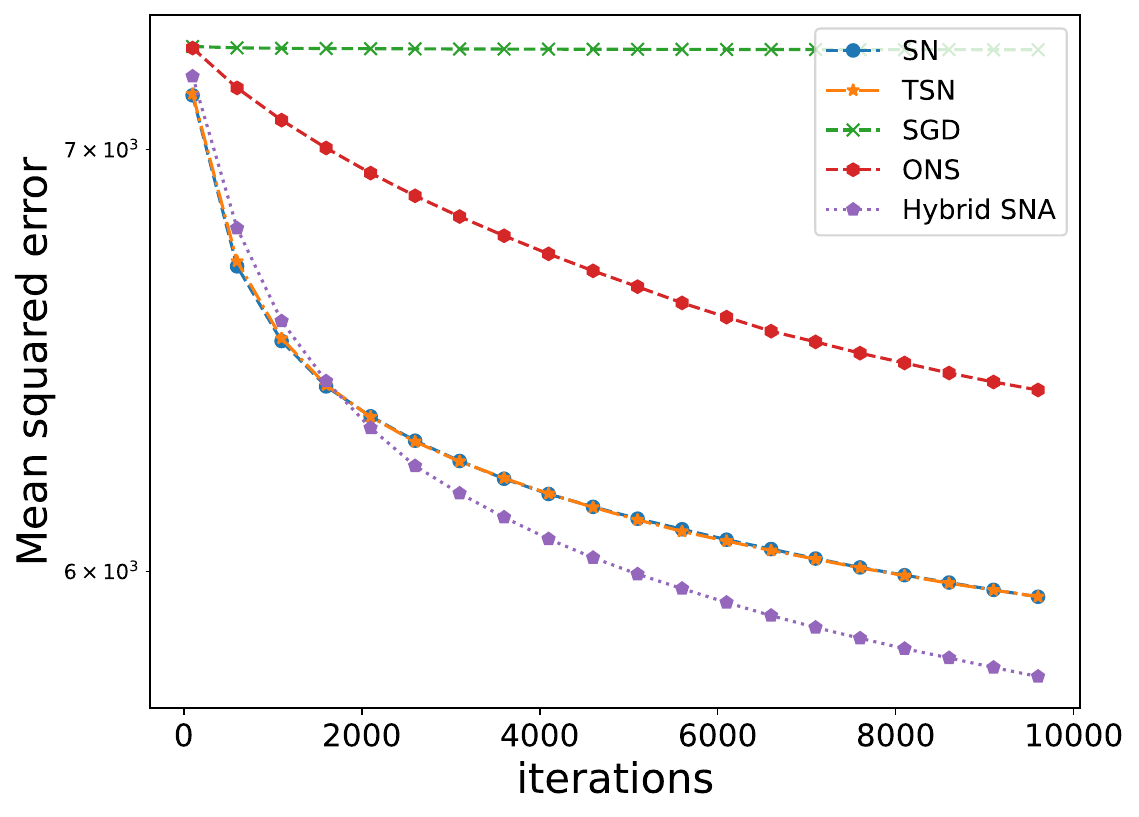}
        %\captionsetup{labelformat=empty}
        \caption{$d=200$}
    \end{subfigure}
    \hfill
    \caption{Evolution of the mean squared error with respect to the iteration.}
    \label{fig:all_mse_synthetic}
\end{figure}
As shown in Figure \ref{fig:all_mse_synthetic}, the general trend is that our hybrid stochastic Newton algorithm performs very well and is always at least close to the best method between the SN algorithm and the ONS. The reason is that we can choose the parameters $\alpha$ or $\beta$ close to 1 and in such a way that $\alpha+\beta=1$. This is the case for $d=10$ and $d=50$ where we observe that the ONS algorithm and our algorithm coincide and yield better performances than the stochastic Newton algorithm and its truncated version.  However, our hybrid stochastic Newton algorithm has the best decreasing of the mean squared error when the dimension $d$ is equal to $100$. Moreover, we observe that for $d=200$, our hybrid stochastic Newton algorithm with $\alpha=0.9$ and $\beta=0.1$ is significantly faster, then followed by the SN algorithm and the ONS. Hence, this interesting finding shows that our algorithm can outperform the standard stochastic Newton algorithm particularly in high-dimensional context.

\subsection{Real data}
We follow the same procedure on two real datasets. \textbf{Forest cover-type} \cite{blackard1999comparative} contains feature vectors of dimension $d = 54$ and consists of predicting forest cover type from cartographic variables only. There are 7 classes, but as in \cite{de2021stochastic}, we focus on classifying 2 versus all others. There are $n = 581012$ observations and we randomly split for training ($99 \%$) and testing ($1 \%$). \textbf{Adult income} \cite{kohavi1996scaling} is a binary classification dataset designed to the prediction task which is to determine whether a person makes an annual income smaller or bigger than 50K. All categorical variables are transformed into binary indicators which leads to $d=100$. We use the predefined split between training (32561 observations) and testing (16281 observations). For each dataset, we standardize $\Phi$ such that each feature ranges from 0 to 1. We run our hybrid stochastic Newton algorithm with  $\alpha=1-10^{-10}$ and $\beta=10^{-10}$ for both datasets.

The comparison of the algorithms is carried out using $\E[ G( \widehat{\theta}_{n})]- G(\theta)$, the expected excess risk estimated on the test set thanks to the standard Monte Carlo procedure with 100 samples. Figure \ref{fig:all_risk_real} shows the performances of the stochastic Newton algorithms. 

\begin{figure}[t]
    \centering
    \begin{subfigure}[b]{0.49\textwidth}
        \includegraphics[width=\textwidth]{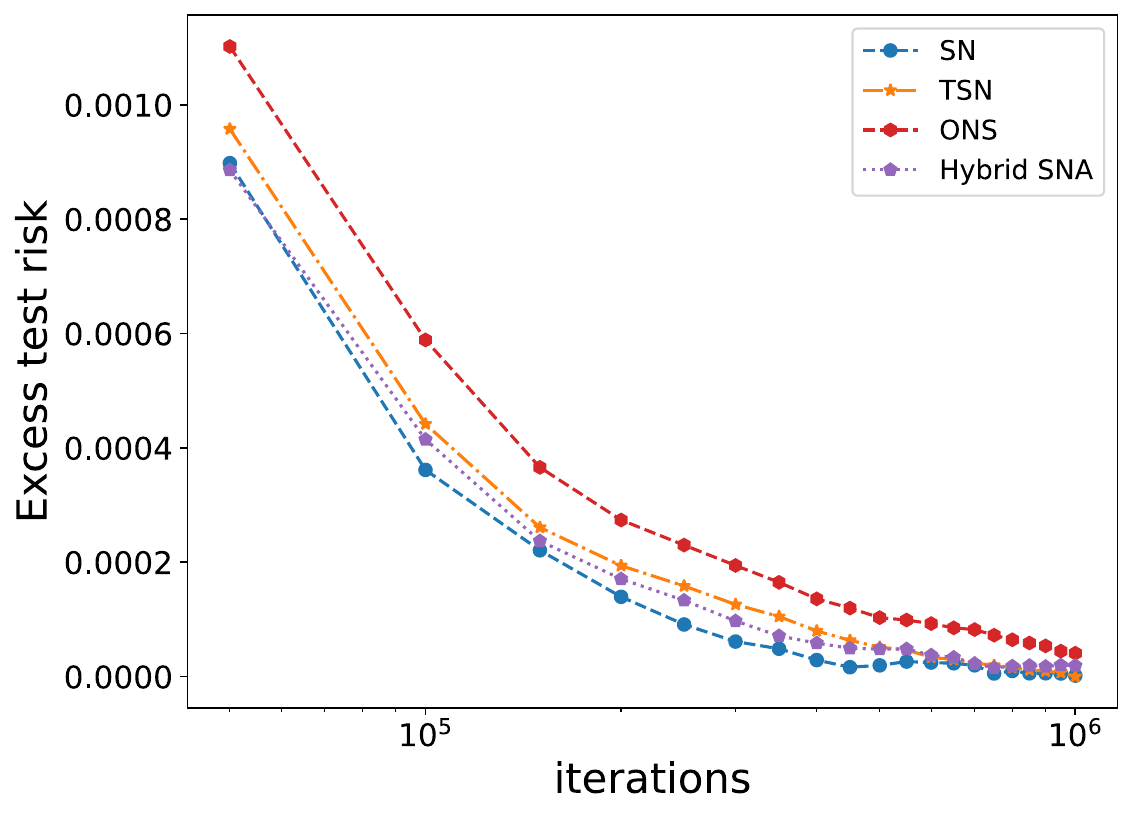}
        %\captionsetup{labelformat=empty}
        \caption{Forest cover-type}
    \end{subfigure}
    \hfill
    \begin{subfigure}[b]{0.49\textwidth}
        \includegraphics[width=\textwidth]{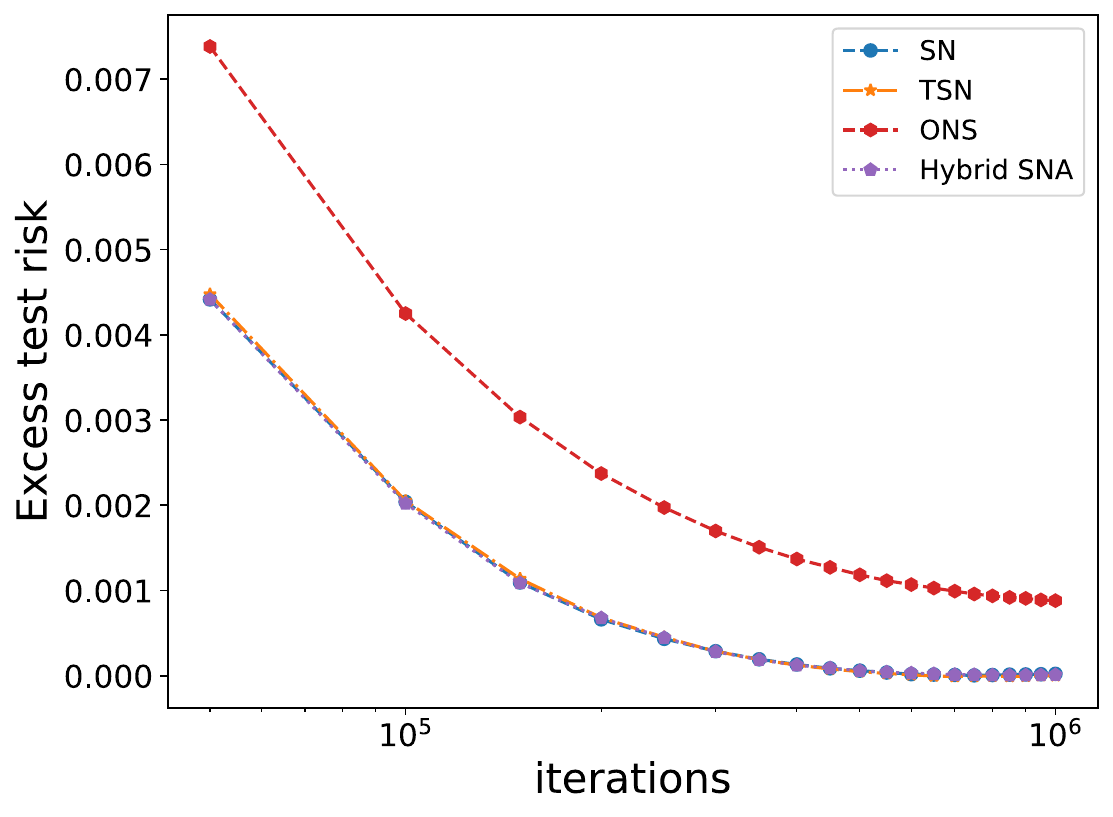}
        %\captionsetup{labelformat=empty}
        \caption{Adult income}
    \end{subfigure}
    \caption{Evolution of the expected excess risk evaluated on the test set with respect to the iteration. We observe that our hybrid stochastic Newton and the SN algorithms are very close in both real-world datasets.}
    \label{fig:all_risk_real}
\end{figure}

\section{Conclusion}
In this paper, we have investigated the theoretical and numerical properties of a new hybrid stochastic Newton algorithm for solving the logistic regression problem which is widely used and has many applications, including in the fields of machine learning, public health, social sciences, ecology, and econometrics. Furthermore, we have proposed a new approach on the analysis of second-order stochastic algorithms, which consists to update the Hessian matrix estimate before updating the algorithm. This approach combined with the modification made in the Hessian matrix estimation, allows us to establish several convergence guarantees while avoiding the useless introduction of truncation as it was done in the previous literature.
Therefore, we provide the almost sure convergence of our new algorithm and his asymptotic efficiency via a central limit theorem. We highlight that our hybrid stochastic Newton algorithm also leads to improvement of the almost sure rate of convergence to the Hessian matrix. Moreover, the numerical experiments performed on simulated and real-world datasets, provide additional empirical evidence of the relevance of our proposed method and validate its applicability. In particular, we find that our hybrid stochastic Newton algorithm achieves performance at least equivalent to that of the best method among the ONS and the standard stochastic Newton algorithm. Furthermore, our method remains adapted when the dimension of the feature vector increases. It would therefore be worthwhile to extend our approach to a broader class of optimization problems for machine learning.

\begin{appendix}

%%%%%%%%%%%%%%%%%%%%%%%%%%%%%%%%%%%%%%%%%%%%%%%%%%%%%%%%%%%%%%%%%%%%%%%%%

\section{Proof of Theorem \ref{hsna_th_convps}}\label{app_hsna_th_convps}

%%%%%%%%%%%%%%%%%%%%%%%%%%%%%%%%%%%%%%%%%%%%%%%%%%%%%%%%%%%%%%%%%%%%%%%%%

\begin{proof} Let us define, for all $n\geqslant 1$, the positive random variable
\begin{equation} \label{hsna_conv_ps_eq1}
    T_n= G(\widehat{\theta}_{n})-G(\theta).
\end{equation}
By a Taylor expansion of the twice continuously differentiable functional $G$, there exists $\xi_{n+1}\in \R^d$ such that
\begin{equation}\label{hsna_conv_ps_eq2}
    T_{n+1}= T_n +\nabla G(\widehat{\theta}_{n})^T(\widehat{\theta}_{n+1}-\widehat{\theta}_{n})+ \dfrac{1}{2} (\widehat{\theta}_{n+1}-\widehat{\theta}_{n})^T\nabla^2 G(\xi_{n+1})(\widehat{\theta}_{n+1}-\widehat{\theta}_{n}).
\end{equation}
However, we clearly have from \eqref{general_hessian_logistic}
\[\Tr[\nabla^2 G(\xi_{n+1})] \leqslant \dfrac{1}{4} \E[\lVert\Phi\rVert^2]. \]
Hence, we obtain that
\begin{equation}\label{hsna_conv_ps_eq3}
    T_{n+1}\leqslant T_n+\nabla G(\widehat{\theta}_{n})^T(\widehat{\theta}_{n+1}-\widehat{\theta}_{n})+\dfrac{1}{8}\E[\lVert\Phi\rVert^2]\lVert \widehat{\theta}_{n+1}-\widehat{\theta}_{n} \rVert^2.
\end{equation}
The recursive definition \eqref{hsna}, combined with the previous inequality, leads to 
\begin{equation}\label{hsna_conv_ps_eq4}
    T_{n+1}\leqslant T_n-\nabla G(\widehat{\theta}_{n})^T S_{n+1}^{-1} Z_{n+1}+\dfrac{1}{8}\E[\lVert\Phi\rVert^2](Y_{n+1}-\pi(\widehat{\theta}_n^T\Phi_{n+1}))^2\Phi_{n+1}^T S_{n+1}^{-2}\Phi_{n+1},
\end{equation}
where $Z_{n+1}=(\pi(\widehat{\theta}_n^T\Phi_{n+1})-Y_{n+1})\Phi_{n+1}$. By taking the conditional expectation on both sides of \eqref{hsna_conv_ps_eq4}, we obtain that almost surely
\begin{equation}\label{hsna_conv_ps_eq5}
    \E\Big[T_{n+1}\Big\lvert \F_n\Big] \leqslant T_n -\E\Big[\nabla G(\widehat{\theta}_{n})^T S_{n+1}^{-1} Z_{n+1}\Big\lvert \F_n\Big] +\dfrac{1}{8}\E[\lVert\Phi\rVert^2]\E\Big[b_{n+1}\Phi_{n+1}^T S_{n+1}^{-2}\Phi_{n+1} \Big\lvert \F_n\Big],
\end{equation}
where $\F_{n} = \sigma\left( (\Phi_1,Y_1),\dots,(\Phi_n,Y_n)\right)$.
The last two terms in this inequality require a particular attention because $S_{n+1}^{-1}$ is not $\F_n$-measurable. However, we obtain from \eqref{hsna_S_n_inv} the very useful relation
\begin{equation}\label{hsna_conv_ps_eq6}
    S_{n+1}^{-1}\Phi_{n+1}=\dfrac{1}{1+g_{n+1}}S_{n}^{-1}\Phi_{n+1},
\end{equation}
with $g_{n+1}=c_{n+1}\Phi_{n+1}^T S_{n}^{-1} \Phi_{n+1}$. Moreover, remark also that Assumption \ref{sna_cond2} implies
\begin{equation}\label{hsna_conv_ps_eq7}
    \dfrac{1}{1+g_{n+1}} \geqslant \dfrac{1}{1+(\alpha+\beta)d_\Phi^2 \valmax(S_{n}^{-1})},
\end{equation}
where $\valmax (S_{n}^{-1})$ denotes the largest eigenvalue of $S_{n}^{-1}$. Hereafter, one deduces from the definition \eqref{general_grad_logistic} that almost surely,
\begin{equation}\label{hsna_conv_ps_eq8}
    \E\Big[Z_{n+1} \Big \lvert \F_n\Big]= \nabla G(\widehat{\theta}_{n}) \qquad \text{a.s.}
\end{equation}
We now have everything we need to prove, through a case study, that
\begin{equation}\label{hsna_conv_ps_eq9}
    -\E\Big[\nabla G(\widehat{\theta}_{n})^T S_{n+1}^{-1} Z_{n+1}\Big\lvert \F_n\Big] \leqslant - \dfrac{1}{1+(\alpha+\beta)d_\Phi^2 \valmax(S_{n}^{-1})} \nabla G(\widehat{\theta}_{n})^T S_{n}^{-1}\nabla G(\widehat{\theta}_{n}) \quad \text{a.s.}
\end{equation}
On the one hand, assume that $\nabla G(\widehat{\theta}_{n})^T S_{n}^{-1} Z_{n+1} \geqslant 0$.
In this case, we have from \eqref{hsna_conv_ps_eq6} and \eqref{hsna_conv_ps_eq7} that
\begin{equation}\label{hsna_conv_ps_eq10}
    -\nabla G(\widehat{\theta}_{n})^T S_{n+1}^{-1} Z_{n+1}\leqslant -\dfrac{1}{1+(\alpha+\beta)d_\Phi^2  \valmax(S_{n}^{-1})} \nabla G(\widehat{\theta}_{n})^T S_{n}^{-1} Z_{n+1},
\end{equation}
which immediately gives \eqref{hsna_conv_ps_eq9} by using the equation \eqref{hsna_conv_ps_eq8}. On the other hand, suppose that $\nabla G(\widehat{\theta}_{n})^T S_{n}^{-1} Z_{n+1} < 0$.
Here, since $(1+g_{n+1})^{-1}\leqslant 1$,  we deduce once again from \eqref{hsna_conv_ps_eq6} that
\[\nabla G(\widehat{\theta}_{n})^T S_{n+1}^{-1}Z_{n+1} \geqslant \nabla G(\widehat{\theta}_{n})^T S_{n}^{-1} Z_{n+1},\]
Then, by taking the conditional expectation and using \eqref{hsna_conv_ps_eq8} on the right-hand side, we obtain that almost surely
\begin{equation}\label{hsna_conv_ps_eq11}
    -\E\Big[\nabla G(\widehat{\theta}_{n})^T S_{n+1}^{-1} Z_{n+1} \Big\lvert \F_n\Big] \leqslant - \nabla G(\widehat{\theta}_{n})^T S_{n}^{-1}\nabla G(\widehat{\theta}_{n}).
\end{equation}
Therefore, since $S_{n}^{-1}$ is a positive definite matrix, this also yields \eqref{hsna_conv_ps_eq9}. Consequently, the inequality \eqref{hsna_conv_ps_eq9} holds for all $n\geqslant 1$. Furthermore, we have from \eqref{hsna_conv_ps_eq6} that
\begin{equation}\label{hsna_conv_ps_eq12}
    \Phi_{n+1}^T S_{n+1}^{-2}\Phi_{n+1}=\dfrac{1}{(1+g_{n+1})^2}\Phi_{n+1}^TS_{n}^{-2}\Phi_{n+1}.
\end{equation}
Moreover, the elementary relation which immediately follows from the definition \eqref{hsna_c_n} is that for all $n\geqslant 1$,
\begin{equation}\label{hsna_conv_ps_eq13}
    b_{n+1} \leqslant \dfrac{1}{\beta}c_{n+1}.
\end{equation}
Hence, we get from \eqref{hsna_conv_ps_eq12} and \eqref{hsna_conv_ps_eq13} that almost surely
\begin{align}
    \E\Big[b_{n+1}\Phi_{n+1}^T S_{n+1}^{-2}\Phi_{n+1} \Big\lvert \F_n\Big]& \leqslant \dfrac{1}{\beta} \E\left[c_{n+1}\dfrac{1}{(1+g_{n+1})^2}\Phi_{n+1}^T S_{n}^{-2}\Phi_{n+1} \Big\lvert \F_n\right] \nonumber\\
    &\leqslant \dfrac{1}{\beta} \E\left[\dfrac{c_{n+1}}{1+g_{n+1}}\Phi_{n+1}^T S_{n}^{-2}\Phi_{n+1} \Big\lvert \F_n\right].\label{hsna_conv_ps_eq14}
\end{align}
We deduce from the recursive relation \eqref{hsna_S_n} that all eigenvalues of $S_n^{-1}$ are smallest than 1.
Therefore, by putting together the inequalities \eqref{hsna_conv_ps_eq5}, \eqref{hsna_conv_ps_eq9} and \eqref{hsna_conv_ps_eq14}, we obtain that for all $n\geqslant 1$
 \begin{equation}\label{hsna_conv_ps_eq15}
    \E\Big[T_{n+1} \Big \lvert \F_n\Big]\leqslant T_n-\mathcal{B}_n+\mathcal{A}_n \qquad \text{a.s.},
\end{equation}
where
\begin{equation}\label{hsna_conv_ps_eq16}
    \mathcal{B}_n= \dfrac{1}{1+(\alpha+\beta)d_\Phi^2} \nabla G(\widehat{\theta}_{n})^T S_{n}^{-1}\nabla G(\widehat{\theta}_{n}),
\end{equation}
and 
\begin{equation}\label{hsna_conv_ps_eq17}
    \mathcal{A}_n= \dfrac{d_\Phi^2}{8\beta} \E\left[\dfrac{c_{n+1}}{1+g_{n+1}}\Phi_{n+1}^T S_{n}^{-2}\Phi_{n+1} \Big\lvert \F_n\right].
\end{equation}
Furthermore, it is easy to observe from \eqref{hsna_S_n_inv} that
\begin{equation}\label{hsna_conv_ps_eq18}
    \Tr[S_{n+1}^{-1}]=\Tr[S_{n}^{-1}]-\dfrac{c_{n+1}}{1+g_{n+1}} \Phi_{n+1}^T S_{n}^{-2} \Phi_{n+1},
\end{equation}
which implies that
\begin{equation}\label{hsna_conv_ps_eq19}
    \E\Big[\Tr[S_{n+1}^{-1}] \Big\lvert \F_n \Big]=\Tr[S_{n}^{-1}]-\E\left[\dfrac{c_{n+1}}{1+g_{n+1}}\Phi_{n+1}^T S_{n}^{-2}\Phi_{n+1} \Big\lvert \F_n\right].
\end{equation}
Moreover, denote by $W_n$ the positive random variable such that
\begin{equation}
    W_n=\Tr[S_{n}^{-1}]+ \sum_{k=1}^{n} \E\left[\dfrac{c_{k+1}}{1+g_{k+1}}\Phi_{k+1}^T S_{k}^{-2}\Phi_{k+1} \Big\lvert \F_k\right].
\end{equation}
We clearly have from \eqref{hsna_conv_ps_eq19} that $\E[W_{n+1}\lvert \F_n]=W_n$ which means that $(W_n)$ is a positive martingale sequence. Hence, we deduce that the sequence $(W_n)$ converges almost surely towards an integrable random variable. In particular, this immediately leads to,
\begin{equation}\label{hsna_conv_ps_eq20}
    \sum_{n=1}^\infty \mathcal{A}_n < +\infty \qquad a.s.
\end{equation}
The three sequences $(T_n)$, $(\mathcal{A}_n)$ and $(\mathcal{B}_n)$ are positive sequences of random variables adapted to $(\F_n)$. Therefore, we conclude from the Robbins-Siegmund Theorem \citep{robbins1971convergence} on inequality \eqref{hsna_conv_ps_eq15} that almost surely, $(T_n)$ converges towards a finite random variable and 
\begin{equation}\label{hsna_conv_ps_eq21}
    \sum_{n=1}^\infty \mathcal{B}_n < +\infty \qquad \text{a.s.}
\end{equation}
Furthermore, we clearly have from \eqref{hsna_conv_ps_eq16} that
\begin{equation*}
    \mathcal{B}_n\geqslant \dfrac{\big(\valmax(S_{n})\big)^{-1}\lVert \nabla G(\widehat{\theta}_{n})\rVert^2}{1+(\alpha+\beta)d_\Phi^2} .
\end{equation*}
Then, it follows with \eqref{hsna_conv_ps_eq21} that
\begin{equation}\label{hsna_conv_ps_eq22}
    \mathlarger{\mathlarger{\sum}}_{n=1}^\infty \dfrac{\big(\valmax(S_{n})\big)^{-1}\lVert \nabla G(\widehat{\theta}_{n})\rVert^2}{1+(\alpha+\beta)d_\Phi^2} < +\infty \qquad \text{a.s.}
\end{equation}
Moreover, the definition \eqref{hsna_S_n} together with the strong law of large numbers associated with the sequence $(\Phi_n)$ imply that for $n$ large enough, there exists a positive constant $\lambda$ such that 
\begin{equation}\label{hsna_conv_ps_eq23}
    \valmax(S_{n}) \leqslant (\alpha+\beta)\lambda n,
\end{equation}
which leads to
\begin{equation}\label{hsna_conv_ps_eq24}
        \sum_{n=1}^\infty \big(\valmax(S_{n})\big)^{-1} = +\infty \qquad \text{a.s.}
\end{equation}
Hence, the combination of \eqref{hsna_conv_ps_eq22} and \eqref{hsna_conv_ps_eq24}
immediately ensures that $\nabla G(\widehat{\theta}_{n})$ converges towards 0. Then, we deduce that $\widehat{\theta}_{n}$ converges almost surely to the unique zero $\theta$ of the gradient, which completes the proof of \eqref{hsna_th_convps_res1}. Furthermore, we can easily rewrite $\overline{H}_n$ as follows
\begin{equation}\label{hsna_conv_ps_eq25}
    \overline{H}_n= \dfrac{1}{n} \sum_{k=1}^n a_k(\theta) \Phi_k \Phi_k^T+\dfrac{1}{n}\sum_{k=1}^n (a_k-a_k(\theta)) \Phi_k \Phi_k^T,
\end{equation}
where $a_k(\theta)=\pi(\theta^T\Phi_k)(1-\pi(\theta^T\Phi_k))$. Then, we immediately deduce from the strong law of large numbers associated with the sequence $(\Phi_n)$ that
\begin{equation}\label{hsna_conv_ps_eq26}
    \lim_{n\to +\infty} \dfrac{1}{n} \sum_{k=1}^n a_k(\theta) \Phi_k \Phi_k^T = \nabla^2 G(\theta) \qquad \text{a.s.}
\end{equation}
Moreover, we have that
\begin{equation}\label{hsna_conv_ps_eq27}
    \left \lVert \dfrac{1}{n}\sum_{k=1}^n (a_k-a_k(\theta)) \Phi_k \Phi_k^T \right\rVert \leqslant \dfrac{1}{n}\sum_{k=1}^n \lvert a_k-a_k(\theta)\rvert \lVert\Phi_k \rVert^2.
\end{equation}
However, we obtain from Lemma 6.2 of \cite{bercu2020efficient} that for all $n\geqslant 1$
\begin{equation}\label{hsna_conv_ps_eq27b}
    \lvert a_n-a_n(\theta)\rvert \leqslant \dfrac{1}{12\sqrt{3}} \lVert \Phi_n\rVert \lVert \widehat{\theta}_{n-1}-\theta\rVert \leqslant \dfrac{d_\Phi}{12\sqrt{3}} \lVert \widehat{\theta}_{n-1}-\theta\rVert.
\end{equation}
It implies with \eqref{hsna_conv_ps_eq27} that
\begin{equation*}
    \left \lVert \dfrac{1}{n}\sum_{k=1}^n (a_k-a_k(\theta)) \Phi_k \Phi_k^T \right\rVert
    \leqslant \dfrac{d_\Phi^3}{12\sqrt{3}}\dfrac{1}{n}\sum_{k=1}^n \lVert \widehat{\theta}_{k-1}-\theta\rVert.
\end{equation*}
Thanks to Toeplitz's lemma and the almost sure convergence of $\widehat{\theta}_{n}$ towards $\theta$, we obtain that
\begin{equation*}
    \lim_{n\to +\infty} \dfrac{1}{n}\sum_{k=1}^n \lVert \widehat{\theta}_{k-1}-\theta\rVert=0 \qquad \text{a.s.},
\end{equation*}
which leads to 
\begin{equation}\label{hsna_conv_ps_eq28}
    \lim_{n\to +\infty} \dfrac{1}{n}\sum_{k=1}^n (a_k-a_k(\theta)) \Phi_k \Phi_k^T =0 \qquad \text{a.s.}
\end{equation}
Consequently, we deduce from \eqref{hsna_conv_ps_eq26} and \eqref{hsna_conv_ps_eq28} that
\begin{equation*}
    \lim_{n\to +\infty} \overline{H}_n =\nabla^2 G(\theta) \qquad \text{a.s.}
\end{equation*}
Furthermore, we have that
\begin{align*}
    \overline{\Sigma}_n&=\dfrac{1}{n} \sum_{k=1}^n \Big(\pi(\widehat{\theta}_{k-1}^T\Phi_k)-Y_{k}\Big)^2 \Phi_k \Phi_k^T,\\
    &=\dfrac{1}{n} \sum_{k=1}^n \Big\{(\pi(\theta^T\Phi_k)-Y_{k})+(\pi(\widehat{\theta}_{k-1}^T\Phi_k)-\pi(\theta^T\Phi_k))\Big\}^2 \Phi_k \Phi_k^T,\\
    &=\dfrac{1}{n} \sum_{k=1}^n \Big(\pi(\theta^T\Phi_k)-Y_{k}\Big)^2 \Phi_k \Phi_k^T+\dfrac{1}{n} \sum_{k=1}^n \Big(\pi(\widehat{\theta}_{k-1}^T\Phi_k)-\pi(\theta^T\Phi_k)\Big)^2 \Phi_k \Phi_k^T \\
    &\,+\dfrac{2}{n} \sum_{k=1}^n             \Big(\pi(\theta^T\Phi_k)-Y_{k}\Big)\Big(\pi(\widehat{\theta}_{k-1}^T\Phi_k)-\pi(\theta^T\Phi_k)\Big) \Phi_k \Phi_k^T.
\end{align*}
In the same way as in \eqref{hsna_conv_ps_eq26}, we obtain that
\begin{equation}\label{hsna_conv_ps_eq29}
    \lim_{n\to +\infty} \dfrac{1}{n} \sum_{k=1}^n \Big(\pi(\theta^T\Phi_k)-Y_{k}\Big)^2 \Phi_k \Phi_k^T = \Sigma(\theta) \qquad \text{a.s.},
\end{equation}
where $\Sigma(\theta)=\E[(\pi(\theta^T \Phi)-Y)^2\Phi\Phi^T]$.
Then, it is easy to prove by using the mean value theorem that for all $n\geqslant 1$
\begin{equation}\label{hsna_conv_ps_eq29b}
    \lvert \pi(\widehat{\theta}_{n-1}^T\Phi_n)-\pi(\theta^T\Phi_n) \rvert \leqslant \dfrac{1}{4} \big\lvert (\widehat{\theta}_{n-1}-\theta)^T\Phi_n\big\rvert \leqslant \dfrac{d_\Phi}{4} \lVert \widehat{\theta}_{n-1}-\theta\rVert. 
\end{equation}
Hence, it follows that
\begin{equation*}
    \left\lVert \dfrac{1}{n} \sum_{k=1}^n (\pi(\widehat{\theta}_{k-1}^T\Phi_k)-\pi(\theta^T\Phi_k))^2 \Phi_k \Phi_k^T \right\rVert 
    \leqslant \dfrac{d_\Phi^4}{16}\dfrac{1}{n} \sum_{k=1}^n \lVert \widehat{\theta}_{k-1}-\theta\rVert^2.
\end{equation*}
As before, it ensures that
\begin{equation}\label{hsna_conv_ps_eq30}
    \lim_{n\to +\infty} \dfrac{1}{n} \sum_{k=1}^n \Big(\pi(\widehat{\theta}_{k-1}^T\Phi_k)-\pi(\theta^T\Phi_k)\Big)^2 \Phi_k \Phi_k^T = 0 \qquad  \text{a.s.}
\end{equation}
Moreover, we deduce from \eqref{hsna_conv_ps_eq29} and \eqref{hsna_conv_ps_eq30} that
\begin{equation}\label{hsna_conv_ps_eq31}
    \lim_{n\to +\infty} \dfrac{1}{n}\sum_{k=1}^n \Big(\pi(\theta^T\Phi_k)-Y_{k}\Big)\Big(\pi(\widehat{\theta}_{k-1}^T\Phi_k)-\pi(\theta^T\Phi_k)\Big) \Phi_k \Phi_k^T= 0 \qquad  \text{a.s.}
\end{equation}
Finally, by putting together \eqref{hsna_conv_ps_eq29}, \eqref{hsna_conv_ps_eq30} and \eqref{hsna_conv_ps_eq31}, we obtain that
\begin{equation*}
    \lim_{n\to +\infty} \overline{\Sigma}_n=\Sigma (\theta)=\nabla^2 G(\theta) \qquad  \text{a.s.}
\end{equation*}
For the last convergence result of Theorem \ref{hsna_th_convps}, it is enough to remark that
\begin{equation*}
    \overline{S}_n=\alpha\overline{H}_n+\beta\overline{\Sigma}_n+\dfrac{1}{n}\I_d,
\end{equation*}
and conclude by using \eqref{hsna_th_convps_res2}.
\end{proof}

%%%%%%%%%%%%%%%%%%%%%%%%%%%%%%%%%%%%%%%%%%%%%%%%%%%%%%%%%%%%%%%%%%%%%%%%%

\section{Proof of Theorem \ref{hsna_th_rates}}\label{app_hsna_th_rates}

%%%%%%%%%%%%%%%%%%%%%%%%%%%%%%%%%%%%%%%%%%%%%%%%%%%%%%%%%%%%%%%%%%%%%%%%%

\begin{proof}
We obtain from \eqref{hsna} that for all $n\geqslant 0$,
\begin{align*}
     \widehat{\theta}_{n+1}-\theta &=\widehat{\theta}_n-\theta -\dfrac{1}{n+1}\overline{S}_{n+1}^{-1}Z_{n+1} \\
     &=\widehat{\theta}_n-\theta -\dfrac{1}{n+1}D_{n+1}-\dfrac{1}{n+1}S^{-1}Z_{n+1},
\end{align*}
where $S=\nabla^2 G(\theta)$ and
\begin{equation*}
    D_{n+1}=\left(\overline{S}_{n+1}^{-1}-S^{-1} \right)Z_{n+1}.
\end{equation*}
Consequently, we obtain that for all $n\geqslant 0$,
\begin{equation}\label{hsna_th_rates_eq2}
    \widehat{\theta}_{n+1}-\theta =\widehat{\theta}_n-\theta -\dfrac{1}{n+1}D_{n+1}-\dfrac{1}{n+1} S^{-1}\Big(\nabla G(\widehat{\theta}_{n})+\varepsilon_{n+1}\Big),
\end{equation}
where
\begin{equation}\label{hsna_th_rates_eq3}
    \varepsilon_{n+1}=Z_{n+1}-\nabla G(\widehat{\theta}_{n}).
\end{equation}
We immediately have from \eqref{hsna_conv_ps_eq8} that $\E[\varepsilon_{n+1}|\F_n]=0$ which means that $(\varepsilon_n)$ is a martingale difference sequence adapted to $(\F_n)$.
Moreover, denote by $\delta_n$ the remainder of the Taylor’s expansion of the gradient
\begin{equation}\label{hsna_th_rates_eq4}
    \delta_n=\nabla G(\widehat{\theta}_{n})-\nabla^2 G(\theta)(\widehat{\theta}_n-\theta)=\nabla G(\widehat{\theta}_{n})-S(\widehat{\theta}_n-\theta).
\end{equation}
It follows from \eqref{hsna_th_rates_eq2} and \eqref{hsna_th_rates_eq4} that
\begin{equation}
\widehat{\theta}_{n+1}-\theta =\left(1-\dfrac{1}{n+1}\right)(\widehat{\theta}_n-\theta) -\dfrac{1}{n+1}D_{n+1}-\dfrac{1}{n+1} S^{-1}\Big(\delta_n+\varepsilon_{n+1}\Big).
\end{equation}
which implies that
\begin{equation}\label{hsna_th_rates_eq5a}
\widehat{\theta}_{n+1}-\theta = -\dfrac{1}{n+1} \sum_{k=0}^n D_{k+1}
-\dfrac{1}{n+1}S^{-1} \sum_{k=0}^n \Big(\delta_k+\varepsilon_{k+1}\Big),
\end{equation}
In order to bring out a martingale term in \eqref{hsna_th_rates_eq5a}, it is necessary to create a shift in this equation by splitting $D_{n+1}$ into two terms 
\begin{equation}
    D_{n+1}= \Big(\overline{S}_{n+1}^{-1}-\overline{S}_{n}^{-1}\Big)Z_{n+1}+\Big(\overline{S}_{n}^{-1}-S^{-1}\Big)Z_{n+1}.
\end{equation}
Then, we obtain from \eqref{hsna_th_rates_eq3} and \eqref{hsna_th_rates_eq5a} that for all $n\geqslant 0$,
\begin{equation}\label{hsna_th_rates_eq5}
  \widehat{\theta}_{n+1}-\theta=-\dfrac{1}{n+1}M_{n+1}-\Delta_n,
\end{equation}
where the martingale term
\begin{equation}\label{hsna_th_rates_eq6}
    M_{n+1}= \sum_{k=0}^n\overline{S}_{k}^{-1}\varepsilon_{k+1},
\end{equation}
and
\begin{equation*}\label{hsna_th_rates_eq7}
    \Delta_n=\dfrac{1}{n+1} \sum_{k=0}^n \Big(\overline{S}_{k+1}^{-1}-\overline{S}_{k}^{-1}\Big)Z_{k+1}+
    \dfrac{1}{n+1}\sum_{k=0}^n\Big(\overline{S}_{k}^{-1}-S^{-1}\Big)\nabla G(\widehat{\theta}_{k})
+\dfrac{1}{n+1}S^{-1} \sum_{k=0}^n \delta_k,
\end{equation*}
which can be rewritten as
\begin{equation}\label{hsna_th_rates_eq7b}
    \Delta_n=\dfrac{1}{n+1} \sum_{k=0}^n \Big(\overline{S}_{k+1}^{-1}-\overline{S}_{k}^{-1}\Big)Z_{k+1}+
    \dfrac{1}{n+1}\sum_{k=0}^n\Big(\overline{S}_{k}^{-1}-S^{-1}\Big)S(\widehat{\theta}_k-\theta)
+\dfrac{1}{n+1}\sum_{k=0}^n \overline{S}_{k}^{-1}\delta_k.
\end{equation}
The sequence $(M_n)$ is a locally square-integrable multidimensional martingale with predictable quadratic variation given by
\begin{equation}\label{hsna_th_rates_eq8b1}
    \langle M \rangle_n = \sum_{k=1}^n \overline{S}_{k}^{-1} \E\Big[\varepsilon_{k+1}\varepsilon_{k+1}^T \Big\lvert \F_k \Big] \overline{S}_{k}^{-1}.
\end{equation}
Furthermore, we have from the definition \eqref{hsna_th_rates_eq3} that almost surely
\begin{align}
    &\E\Big[\varepsilon_{n+1}\varepsilon_{n+1}^T \Big\lvert \F_n \Big]\nonumber\\
    &= \E\Big[Z_{n+1}Z_{n+1}^T \Big\lvert \F_n \Big]- \nabla G(\widehat{\theta}_{n})\left(\nabla G(\widehat{\theta}_{n})\right)^T,\nonumber\\
    &= \E\Big[(\pi(\widehat{\theta}_n^T\Phi_{n+1})-Y_{n+1})^2\Phi_{n+1}\Phi_{n+1}^T \Big\lvert \F_n \Big]- \nabla G(\widehat{\theta}_{n})\left(\nabla G(\widehat{\theta}_{n})\right)^T,\nonumber\\
    &= \E\Big[\Big((\pi(\widehat{\theta}_n^T\Phi_{n+1})-\pi(\theta^T\Phi_{n+1}))+ (\pi(\theta^T\Phi_{n+1})-Y_{n+1})\Big)^2\Phi_{n+1}\Phi_{n+1}^T\Big\lvert \F_n \Big]\nonumber \\
    & \;- \nabla G(\widehat{\theta}_{n})\left(\nabla G(\widehat{\theta}_{n})\right)^T,\nonumber\\
     &= \E\Big[\Big(\pi(\widehat{\theta}_n^T\Phi_{n+1})-\pi(\theta^T\Phi_{n+1})\Big)^2\Phi_{n+1}\Phi_{n+1}^T\Big\lvert \F_n \Big]+\nabla^2 G(\theta)- \nabla G(\widehat{\theta}_{n})\left(\nabla G(\widehat{\theta}_{n})\right)^T,\label{hsna_th_rates_eq8b2}
\end{align} 
since we already saw that
\begin{equation*}
    \E\Big[(\pi(\theta^T\Phi_{n+1})-Y_{n+1})^2\Phi_{n+1}\Phi_{n+1}^T\Big\lvert \F_n \Big]= \nabla^2 G(\theta) \qquad \text{a.s.},
\end{equation*}
and 
\begin{equation*}
    \E\Big[(\pi(\widehat{\theta}_n^T\Phi_{n+1})-\pi(\theta^T\Phi_{n+1}))(\pi(\theta^T\Phi_{n+1})-Y_{n+1})\Phi_{n+1}\Phi_{n+1}^T\Big\lvert \F_n \Big]= 0 \qquad \text{a.s.}
\end{equation*}
However, the almost sure convergence \eqref{hsna_th_convps_res1} implies by continuity that
\begin{equation}\label{hsna_th_rates_eq8b3}
    \lim_{n\to +\infty} \nabla G(\widehat{\theta}_{n})\left(\nabla G(\widehat{\theta}_{n})\right)^T=0 \qquad a.s.
\end{equation}
Moreover, we obtain via Assumption \ref{sna_cond2} and the inequality \eqref{hsna_conv_ps_eq29b} that
\begin{align}
    \left\lVert\E\Big[(\pi(\widehat{\theta}_n^T\Phi_{n+1})-\pi(\theta^T\Phi_{n+1}))^2\Phi_{n+1}\Phi_{n+1}^T\Big\lvert \F_n \Big] \right\rVert &\leqslant \dfrac{1}{16} \E\Big[\lVert \widehat{\theta}_{n}-\theta\rVert^2 \lVert\Phi_{n+1} \rVert^4 \Big\lvert \F_n \Big], \nonumber\\
    &\leqslant \dfrac{d_{\Phi}^4}{16} \lVert \widehat{\theta}_{n}-\theta\rVert^2 \label{hsna_th_rates_eq8b3b} \qquad \text{a.s.}
\end{align}
which ensures that
\begin{equation}\label{hsna_th_rates_eq8b4}
    \lim_{n\to +\infty} \E\Big[(\pi(\widehat{\theta}_n^T\Phi_{n+1})-\pi(\theta^T\Phi_{n+1}))^2\Phi_{n+1}\Phi_{n+1}^T\Big\lvert \F_n \Big]=0 \qquad \text{a.s.}
\end{equation}
Hence, we deduce from \eqref{hsna_th_rates_eq8b2}, \eqref{hsna_th_rates_eq8b3} and \eqref{hsna_th_rates_eq8b4} that
\begin{equation}\label{hsna_th_rates_eq8b5}
    \lim_{n\to +\infty}  \E\Big[\varepsilon_{n+1}\varepsilon_{n+1}^T \Big\lvert \F_n \Big] = \nabla^2 G(\theta) \qquad \text{a.s.}
\end{equation}
The almost sure convergence \eqref{hsna_th_convps_res3} to
\begin{equation}\label{hsna_th_rates_eq8b6}
        \lim_{n\to +\infty} \overline{S}_n^{-1}= (\alpha+\beta)^{-1}\left(\nabla^2 G(\theta)\right)^{-1} \qquad \text{a.s.}
\end{equation}
Consequently, we obtain from \eqref{hsna_th_rates_eq8b1}, \eqref{hsna_th_rates_eq8b5} and \eqref{hsna_th_rates_eq8b6} that
\begin{equation}\label{hsna_th_rates_eq8b7}
        \lim_{n\to +\infty} \dfrac{1}{n}\langle M\rangle_n= (\alpha+\beta)^{-2}\left(\nabla^2 G(\theta)\right)^{-1} \qquad \text{a.s.}
\end{equation}
Since the random vector $\Phi$ is bounded, it follows from the strong law of large numbers for multidimensional martingales given, e.g., by Theorem 4.3.16 in \cite{duflo1996algorithmes} that
\begin{equation}\label{hsna_th_rates_eq8b8}
    \lVert M_n \rVert^2=\mathcal{O}(n\log(n)) \qquad \text{a.s.}
\end{equation}
which means that there exists a finite positive random variable $E$ such that for all $n\geqslant 1$,
\begin{equation}\label{hsna_th_rates_eq8b9}
    \lVert M_{n} \rVert \leqslant E\sqrt{n\log(n)} \qquad \text{a.s.}
\end{equation}
Moreover, it follows from the Taylor expansion with integral remainder that
\begin{equation}\label{hsna_th_rates_eq8}
    \lVert \delta_n \rVert =o\Big(\lVert \widehat{\theta}_n-\theta \rVert \Big) \qquad \text{a.s.}
\end{equation}
Hereafter, denote for all $n\geqslant 1$,
\begin{equation}\label{hsna_th_rates_eq9}
    R_n= \dfrac{1}{n} \sum_{k=1}^n \Big(\overline{S}_{k+1}^{-1}-\overline{S}_{k}^{-1}\Big)Z_{k+1}.
\end{equation}
We have from \eqref{hsna_conv_ps_eq6} that
\begin{equation}\label{hsna_th_rates_eq9b}
    \overline{S}_{n+1}^{-1}\Phi_{n+1}=\dfrac{n+1}{(1+g_{n+1})n}\overline{S}_{n}^{-1}\Phi_{n+1},
\end{equation}
where $g_{n+1}=c_{n+1}\Phi_{n+1}^T S_{n}^{-1} \Phi_{n+1}$. Hence, one immediately obtains that
\begin{equation}\label{hsna_th_rates_eq9c}
    \Big(\overline{S}_{n+1}^{-1}-\overline{S}_{n}^{-1}\Big)\Phi_{n+1}=\dfrac{1}{1+g_{n+1}}\left(\dfrac{1}{n}-g_{n+1}\right)\overline{S}_{n}^{-1}\Phi_{n+1}.
\end{equation}
Consequently, we deduce from the fact that $Z_{n+1}=(\pi(\widehat{\theta}_n^T\Phi_{n+1})-Y_{n+1})\Phi_{n+1}$ together with \eqref{hsna_th_rates_eq9} and \eqref{hsna_th_rates_eq9c} that
\begin{equation}\label{hsna_th_rates_eq10}
    \lVert R_n \rVert \leqslant  \dfrac{1}{n}\sum_{k=1}^n \left \lvert \dfrac{1}{k}-g_{k+1}\right\rvert \cdot \lVert \overline{S}_{k}^{-1}\Phi_{k+1} \rVert.
\end{equation}
We obtain from Assumption \ref{sna_cond2} and the almost sure convergence of $\overline{S}_{n}$ given by \eqref{hsna_th_convps_res3} that there exists a positive constant $C$ such that for $n$ large enough
\begin{equation}\label{hsna_th_rates_eq10b}
    \lVert \overline{S}_{n}^{-1}\Phi_{n+1} \rVert \leqslant C  \qquad \text{a.s.}
\end{equation}
Moreover, we also obtain from \eqref{hsna_th_convps_res3} that there exists a positive constant $C$ such that for $n$ large enough
\begin{equation}\label{hsna_th_rates_eq10c}
    g_n\leqslant \dfrac{C}{n}.
\end{equation}
Then, it follows from \eqref{hsna_th_rates_eq10}, \eqref{hsna_th_rates_eq10b} and \eqref{hsna_th_rates_eq10c} that there exists a finite positive random variable $D$ such that
\begin{equation}\label{hsna_th_rates_eq11}
    \lVert R_n \rVert \leqslant \dfrac{D}{n}+\dfrac{C\log(n)}{n} \qquad \text{a.s.}
\end{equation}
Therefore, we deduce from \eqref{hsna_th_rates_eq7}, \eqref{hsna_th_rates_eq8} and \eqref{hsna_th_rates_eq11} that there exist a constant $0 < c < 1/2$ and a finite positive random variable $D$ such that for all $n \geqslant 1$,
\begin{equation}\label{hsna_th_rates_eq12}
    \lVert \Delta_n \rVert \leqslant \dfrac{D}{n}+\dfrac{C\log(n)}{n}+c L_n \qquad \text{a.s.},
\end{equation}
where
\begin{equation}\label{hsna_th_rates_eq13}
    L_n= \dfrac{1}{n} \sum_{k=1}^n \lVert \widehat{\theta}_k-\theta \rVert 
\end{equation}
Furthermore, we obtain from \eqref{hsna_th_rates_eq5}, \eqref{hsna_th_rates_eq8b9} and \eqref{hsna_th_rates_eq12} that almost surely
\begin{align}
    L_{n+1}&=\left( 1-\dfrac{1}{n+1}\right)L_n + \dfrac{1}{n+1}\lVert \widehat{\theta}_{n+1}-\theta \rVert,\nonumber\\
     &\leqslant \left( 1-\dfrac{1}{n+1}\right)L_n + \dfrac{1}{n+1}\left( \dfrac{1}{n}\lVert M_{n+1} \rVert + \lVert \Delta_n \rVert\right),\nonumber\\
    &\leqslant \left( 1-\dfrac{e}{n+1}\right)L_n + \dfrac{1}{n(n+1)}\Big(E\sqrt{n\log(n)}+C\log (n)+D\Big),\label{hsna_th_rates_eq13b}
\end{align}
where $e=1-c$. Then, one can show that
\begin{equation}\label{hsna_th_rates_eq14}
L_n \leqslant \left( \dfrac{2}{n+1}\right)^e L_1 + \sum_{k=2}^n \left(\dfrac{k+1}{n+1}\right)^e \dfrac{1}{k(k+1)}\Big(E\sqrt{k\log(k)}+C\log (k)+D\Big).
\end{equation}
Hence, it follows from \eqref{hsna_th_rates_eq14} that for all $n\geqslant 2$
\begin{align*}
    L_n&\leqslant \left( \dfrac{2}{n+1}\right)^e L_1 + \dfrac{A}{(n+1)^e}\sum_{k=2}^n \dfrac{\sqrt{\log (k)}}{\sqrt{k}(k+1)^{1-e}},\\
    &\leqslant \left( \dfrac{2}{n}\right)^e L_1 + \dfrac{A\sqrt{\log (n)}}{n^e}\sum_{k=2}^n \dfrac{1}{k^{a}},\\
    &\leqslant \left( \dfrac{2}{n}\right)^e L_1 + \dfrac{A\sqrt{\log (n)}}{(1-a)n^{e+a-1}} \qquad\qquad \text{a.s.},
\end{align*}
where $A=\max(C,D,E)$ and $a=3/2-e=1/2+c$. We recall that the positive constant $c$ has been chosen such that $c<1/2$ which means that $0 < a < 1$. Then, we obtain for all $n\geqslant 2$ that
\begin{equation*}
    L_n\leqslant \left( \dfrac{2}{n}\right)^e L_1 + \dfrac{A}{(1-a)} \sqrt{\dfrac{\log (n)}{n}} \qquad \text{a.s.}
\end{equation*}
Since $e>1/2$, we deduce that
\begin{equation}\label{hsna_th_rates_eq15}
    L_n^2= \mathcal{O} \left(\dfrac{\log (n)}{n}\right) \qquad \text{a.s.}
\end{equation}
Hence, we have from \eqref{hsna_th_rates_eq12} and \eqref{hsna_th_rates_eq15} that
\begin{equation}\label{hsna_th_rates_eq16}
    \lVert \Delta_n \rVert^2= \mathcal{O} \left(\dfrac{\log (n)}{n}\right) \qquad \text{a.s.}
\end{equation}
Finally, the combination of \eqref{hsna_th_rates_eq5}, \eqref{hsna_th_rates_eq8b8} and \eqref{hsna_th_rates_eq16} ensures that
\begin{equation}\label{hsna_th_rates_eq17}
    \lVert \widehat{\theta}_n-\theta \rVert^2= \mathcal{O} \left(\dfrac{\log (n)}{n}\right) \qquad \text{a.s.},
\end{equation}
which is exactly the almost sure rate of convergence \eqref{hsna_th_rates_res1}.
Hereafter, we can rewrite the matrix $\overline{S}_n$ for all $n\geqslant 1$ as
\begin{align}
    \overline{S}_n &=\dfrac{1}{n}\sum_{k=1}^n c_k \Phi_k \Phi_k^T + \dfrac{1}{n}\I_d,\nonumber\\
    &=\dfrac{1}{n} \sum_{k=1}^n \Big\{c_k \Phi_k \Phi_k^T-\E\Big[c_k\Phi_k\Phi_k^T\Big\lvert \F_{k-1} \Big]\Big \} + \dfrac{1}{n} \sum_{k=1}^n \E\Big[c_k\Phi_k\Phi_k^T\Big\lvert \F_{k-1} \Big]+ \dfrac{1}{n}\I_d,\nonumber\\
    &=\dfrac{1}{n} T_n+\dfrac{\alpha}{n}\sum_{k=1}^n\E\Big[a_k\Phi_k\Phi_k^T\Big\lvert \F_{k-1} \Big] +\dfrac{\beta}{n}\sum_{k=1}^n\E\Big[b_k\Phi_k\Phi_k^T\Big\lvert \F_{k-1} \Big]
     + \dfrac{1}{n}\I_d,\label{hsna_th_rates_eq18}
\end{align}
where
\begin{equation*}
    T_n= \sum_{k=1}^n \Big(c_k \Phi_k \Phi_k^T-\E\Big[c_k\Phi_k\Phi_k^T\Big\lvert \F_{k-1} \Big]\Big ).
\end{equation*}
By using the same technique as in equation \eqref{hsna_th_rates_eq8b2}, we have that
\begin{equation}\label{hsna_th_rates_eq19}
    \E\Big[b_k\Phi_k\Phi_k^T\Big\lvert \F_{k-1} \Big]= S+\E\Big[\Big(\pi(\widehat{\theta}_{k-1}^T\Phi_k)-\pi(\theta^T\Phi_k)\Big)^2\Phi_k\Phi_k^T\Big\lvert \F_{k-1} \Big] \qquad \text{a.s.}
\end{equation}
Moreover, we immediately obtain that
\begin{equation}\label{hsna_th_rates_eq19b}
    \E\Big[a_k\Phi_k\Phi_k^T\Big\lvert \F_{k-1} \Big]= S+\E\Big[\Big(a_k-a_k(\theta)\Big)\Phi_k\Phi_k^T\Big\lvert \F_{k-1} \Big] \qquad \text{a.s.}
\end{equation}
Hence, it follows from \eqref{hsna_th_rates_eq19} and \eqref{hsna_th_rates_eq19b} that
\begin{align}
    \overline{S}_n &= \dfrac{1}{n} T_n +(\alpha+\beta)S   +\dfrac{\alpha}{n}\sum_{k=1}^n\E\Big[(a_k-a_k(\theta))\Phi_k\Phi_k^T\Big\lvert \F_{k-1} \Big] \nonumber \\
    & \; +\dfrac{\beta}{n}\sum_{k=1}^n\E\Big[(\pi(\widehat{\theta}_{k-1}^T\Phi_k)-\pi(\theta^T\Phi_k))^2\Phi_k\Phi_k^T\Big\lvert \F_{k-1} \Big] + \dfrac{1}{n}\I_d,\label{hsna_th_rates_eq20}
\end{align}
which leads to
\begin{align}
    \lVert \overline{S}_n -(\alpha+\beta)S \rVert &\leqslant  \dfrac{1}{n} \lVert T_n\rVert +\dfrac{\alpha}{n}\sum_{k=1}^n \left \lVert \E\Big[\Big(a_k-a_k(\theta)\Big)\Phi_k\Phi_k^T\Big\lvert \F_{k-1} \Big] \right \rVert \nonumber \\
    & \; +\dfrac{\beta}{n} \sum_{k=1}^n \left \lVert \E\Big[\Big(\pi(\widehat{\theta}_k^T\Phi_k)-\pi(\theta^T\Phi_k)\Big)^2\Phi_k\Phi_k^T\Big\lvert \F_{k-1} \Big] \right \rVert + \dfrac{d}{n}.\label{hsna_th_rates_eq20b}
\end{align}
However, we have as in the inequalities \eqref{hsna_conv_ps_eq27b} and \eqref{hsna_th_rates_eq8b3b} that almost surely,
\begin{equation*}
    \left \lVert \E\Big[\Big(\pi(\widehat{\theta}_{k-1}^T\Phi_k)-\pi(\theta^T\Phi_k)\Big)^2\Phi_k\Phi_k^T\Big\lvert \F_{k-1} \Big] \right \rVert \leqslant \dfrac{d_\Phi ^4}{16} \lVert \widehat{\theta}_{k-1}-\theta\rVert^2,
\end{equation*}
and 
\begin{equation*}
    \left \lVert \E\Big[\Big(a_k-a_k(\theta)\Big)\Phi_k\Phi_k^T\Big\lvert \F_{k-1} \Big] \right \rVert \leqslant \dfrac{d_\Phi ^3}{12\sqrt{3}} \lVert \widehat{\theta}_{k-1}-\theta\rVert.
\end{equation*}
Consequently, it follows from \eqref{hsna_th_rates_eq20b} that
\begin{equation}\label{hsna_th_rates_eq21}
   \lVert \overline{S}_n -(\alpha+\beta)S \rVert \leqslant  \dfrac{1}{n} \lVert T_n\rVert + \dfrac{\alpha d_\Phi ^3}{12\sqrt{3}}\dfrac{1}{n}\sum_{k=1}^n \lVert \widehat{\theta}_{k-1}-\theta\rVert+\dfrac{\beta d_\Phi ^4}{16} \dfrac{1}{n} \sum_{k=1}^n \lVert \widehat{\theta}_{k-1}-\theta\rVert^2
  + \dfrac{d}{n}. 
\end{equation}
Finally, we deduce from \eqref{hsna_th_rates_res1} and \eqref{hsna_th_rates_eq21} that there exist finite positive random variables $E_1$ and $E_2$ such that
\begin{align}
\lVert \overline{S}_n -(\alpha+\beta)S \rVert &\leqslant  \dfrac{1}{n} \lVert T_n\rVert + \dfrac{\alpha d_\Phi^3 E_1}{12\sqrt{3}}\dfrac{1}{n}\sum_{k=1}^n \sqrt{\dfrac{\log (k)}{k}}+\dfrac{\beta d_\Phi^4 E_2}{16} \dfrac{1}{n} \sum_{k=1}^n \dfrac{\log (k)}{k}+ \dfrac{d}{n},\nonumber \\
 &\leqslant  \dfrac{1}{n} \lVert T_n\rVert 
 + \dfrac{\alpha d_\Phi^3 E_1}{6\sqrt{3}} \sqrt{\dfrac{\log(n)}{n}}+\dfrac{\beta d_\Phi^4 E_2}{16} \dfrac{\log^2 (n)}{n} + \dfrac{d}{n} \qquad \text{a.s.} \label{hsna_th_rates_eq22}
\end{align}
Moreover, by construction, the sequence $(T_n)$ is a locally square-integrable multidimensional martingale. Since the random vector $\Phi$ is bounded, we have from the strong law of large numbers for multidimensional martingales given, e.g., by Theorem 4.3.16 in \cite{duflo1996algorithmes} that
\begin{equation*}\label{hsna_th_rates_eq23}
    \lVert T_n \rVert^2=\mathcal{O}(n\log(n)) \qquad \text{a.s.}
\end{equation*}
which ensures with \eqref{hsna_th_rates_eq22} that
\begin{equation*}
    \lVert \overline{S}_{n} -(\alpha+\beta)S\rVert^2= \mathcal{O}\left( \dfrac{\log(n)}{n} \right) \qquad \text{a.s.},
\end{equation*}
The last result \eqref{hsna_th_rates_res3} is obtained by using \eqref{hsna_th_rates_res2} and the following equation
\begin{equation}
    \overline{S}_{n}^{-1} -(\alpha+\beta)^{-1}S^{-1} = \overline{S}_{n}^{-1} \Big ((\alpha+\beta)S-\overline{S}_{n} \Big)(\alpha+\beta)^{-1}S^{-1},
\end{equation}
which completes the proof of Theorem \ref{hsna_th_rates}.
\end{proof}

%%%%%%%%%%%%%%%%%%%%%%%%%%%%%%%%%%%%%%%%%%%%%%%%%%%%%%%%%%%%%%%%%%%%%%%%%

\section{Proof of Theorem \ref{hsna_th_clt}}\label{app_hsna_th_clt}

%%%%%%%%%%%%%%%%%%%%%%%%%%%%%%%%%%%%%%%%%%%%%%%%%%%%%%%%%%%%%%%%%%%%%%%%%

\begin{proof}
We recall from \eqref{hsna_th_rates_eq5} that for all $n\geqslant 0$,
\begin{equation}\label{hsna_th_clt_eq1}
  \widehat{\theta}_{n+1}-\theta=-\dfrac{1}{n+1}M_{n+1}-\Delta_n
\end{equation}
which leads to
\begin{equation}\label{hsna_th_clt_eq2}
  \sqrt{n+1}(\widehat{\theta}_{n+1}-\theta)=-\dfrac{1}{\sqrt{n+1}}M_{n+1}-\sqrt{n+1}\Delta_n.
\end{equation}
where $\sqrt{n+1}\Delta_n= O_n+P_n+Q_n$ and 
\begin{align*}
    O_n &=\dfrac{1}{\sqrt{n+1}} \sum_{k=0}^n \Big(\overline{S}_{k+1}^{-1}-\overline{S}_{k}^{-1}\Big)Z_{k+1}, \\
    P_n&=\dfrac{1}{\sqrt{n+1}}\sum_{k=0}^n\Big(\overline{S}_{k}^{-1}-S^{-1}\Big)S(\widehat{\theta}_k-\theta),\\
    Q_n&=\dfrac{1}{\sqrt{n+1}}\sum_{k=0}^n \overline{S}_{k}^{-1}\delta_k.
\end{align*}
It immediately follows from \eqref{hsna_th_rates_eq11} that
\begin{equation*}
    \lVert O_n \rVert =\mathcal{O}\left( \dfrac{\log(n)}{\sqrt{n}}\right) \qquad \text{a.s.},
\end{equation*}
which ensures that
\begin{equation}\label{hsna_th_clt_eq3}
    \lim_{n\to +\infty} \lVert O_n \rVert = 0 \qquad \text{a.s.}
\end{equation}
Moreover, we deduce from \eqref{hsna_th_rates_res1} and \eqref{hsna_th_rates_res2} that
\begin{equation*}
    \lVert P_n \rVert= \mathcal{O}\left(  \dfrac{1}{\sqrt{n}} \sum_{k=1}^n \dfrac{\log(k)}{k}\right)=\mathcal{O}\left(\dfrac{\log^2(n)}{\sqrt{n}} \right) \qquad \text{a.s.},
\end{equation*}
which implies that
\begin{equation}\label{hsna_th_clt_eq4}
    \lim_{n\to +\infty} \lVert P_n \rVert = 0 \qquad \text{a.s.}
\end{equation}
Furthermore, we have from  \eqref{hsna_conv_ps_eq27b} and \eqref{hsna_th_rates_eq4} that
\begin{align}
    \lVert \delta_n \rVert &\leqslant \left\lVert\int_0^1 \nabla^2 G(\theta + t(\widehat{\theta}_{n}-\theta))(\widehat{\theta}_{n}-\theta) \text{dt} -\nabla^2 G(\theta) (\widehat{\theta}_{n}-\theta)\right\rVert \nonumber\\
   & \leqslant \int_0^1 \left\lVert \nabla^2 G(\theta + t(\widehat{\theta}_{n}-\theta))-\nabla^2 G(\theta)\right\rVert \lVert\widehat{\theta}_{n}-\theta\rVert \text{dt}\nonumber \\
   &\leqslant \dfrac{d_\Phi^3}{24\sqrt{3}} \lVert\widehat{\theta}_{n}-\theta\rVert^2.\label{hsna_th_clt_eq4b}
\end{align}
Consequently, we obtain from \eqref{hsna_th_rates_res1} that
\begin{equation}\label{hsna_th_clt_eq5}
    \lVert Q_n \rVert =\mathcal{O}\left( \dfrac{1}{\sqrt{n}}\sum_{k=1}^n \lVert \widehat{\theta}_{k}-\theta\rVert^2 \right)=\mathcal{O}\left( \dfrac{1}{\sqrt{n}}\sum_{k=1}^n \dfrac{\log(k)}{k} \right)=\mathcal{O}\left( \dfrac{\log^2(n)}{\sqrt{n}}\right) \qquad \text{a.s.}
\end{equation}
which leads to
\begin{equation}\label{hsna_th_clt_eq6}
    \lim_{n\to +\infty} \lVert Q_n \rVert = 0 \qquad \text{a.s.}
\end{equation}
By combining equations \eqref{hsna_th_clt_eq3}, \eqref{hsna_th_clt_eq4} and \eqref{hsna_th_clt_eq6}, we deduce that
\begin{equation}\label{hsna_th_clt_eq7}
    \lim_{n\to +\infty} \sqrt{n+1}\lVert \Delta_n\rVert = 0 \qquad \text{a.s.}
\end{equation}
Hence, it only remains to study the asymptotic behavior of the martingale $(M_n)$. We already proved in \eqref{hsna_th_rates_eq8b7} that its predictable quadratic variation satisfies
\begin{equation}\label{hsna_th_clt_eq8}
        \lim_{n\to +\infty} \dfrac{1}{n}\langle M\rangle_n= \left(\nabla^2 G(\theta)\right)^{-1} \qquad \text{a.s.}
\end{equation}
Once again, since the random vector $\Phi$ is bounded, we have
\begin{equation*}%\label{hsna_th_clt_eq9}
    \lVert \varepsilon_{n+1} \rVert =\lVert Z_{n+1}-\nabla G(\widehat{\theta}_{n})\rVert \leqslant \lVert \Phi_{n+1} \rVert + \E[\lVert \Phi \rVert]\leqslant 2 d_\Phi \qquad \text{a.s.}
\end{equation*}
which leads to 
\begin{equation}\label{hsna_th_clt_eq10}
    \underset{n\geqslant 1}{\sup} \, \E \big[\lVert \varepsilon_{n+1} \rVert^4 \big\lvert \F_{n} \big] \leq 16 d_\Phi^4 \qquad \text{a.s.}
\end{equation}
Consequently, the sequence $(M_n)$ satisfies Lindeberg’s condition. Therefore, we deduce from the central limit theorem for martingales given by Corollary 2.1.10 in \cite{duflo1996algorithmes} that
\begin{equation*}\label{hsna_th_clt_eq11}
    \dfrac{1}{\sqrt{n}} M_{n} \quad \overset{\mathcal{L}}{\underset{n\to +\infty}{\longrightarrow}} \quad \mathcal{N}_d\left(0,\left(\nabla^2 G(\theta)\right)^{-1}\right),
\end{equation*}
which achieves the proof of \eqref{hsna_th_clt_res1} thanks to \eqref{hsna_th_clt_eq2} and \eqref{hsna_th_clt_eq7}.
\end{proof}

%%%%%%%%%%%%%%%%%%%%%%%%%%%%%%%%%%%%%%%%%%%%%%%%%%%%%%%%%%%%%%%%%%%%%%%%%

\section{Proof of Theorem \ref{hsna_th_qsl}}\label{app_hsna_th_qsl}

%%%%%%%%%%%%%%%%%%%%%%%%%%%%%%%%%%%%%%%%%%%%%%%%%%%%%%%%%%%%%%%%%%%%%%%%%

\begin{proof}
We recall from \eqref{hsna_th_rates_eq5} that for all $n\geqslant 1$,
\begin{equation*}\label{hsna_th_qsl_eq1}
  \widehat{\theta}_{n}-\theta=-\dfrac{1}{n}M_{n}-\Delta_{n-1}
\end{equation*}
which implies that
\begin{equation}\label{hsna_th_qsl_eq2}
     (\widehat{\theta}_{n}-\theta) (\widehat{\theta}_{n}-\theta)^T =\dfrac{1}{n^2}  M_{n}M_{n}^T +\Delta_{n-1}\Delta_{n-1}^T+\dfrac{1}{n} \Big( M_{n}\Delta_{n-1}^T +
     \Delta_{n-1}M_n^T\Big).
\end{equation}
By summing on both sides of \eqref{hsna_th_qsl_eq2}, we obtain that
\begin{equation}
  \dfrac{1}{\log(n)}\sum_{k=1}^n ( \widehat{\theta}_{k}-\theta) (\widehat{\theta}_{k}-\theta)^T =\dfrac{1}{\log(n)}\sum_{k=1}^n \dfrac{1}{k^2} M_{k}M_{k}^T + \dfrac{1}{\log(n)}\sum_{k=1}^n \Delta_{k-1}\Delta_{k-1}^T + N_n\label{hsna_th_qsl_eq3}
\end{equation}
where
\begin{equation*}
    N_n= \dfrac{1}{\log(n)}\sum_{k=1}^n \dfrac{1}{k} \Big( M_{k}\Delta_{k-1}^T +
     \Delta_{k-1}M_k^T\Big).
\end{equation*}
We already saw from \eqref{hsna_th_rates_eq8b7} that 
\begin{equation}\label{hsna_th_qsl_eq4}
        \lim_{n\to +\infty} \dfrac{1}{n}\langle M\rangle_n= \left(\nabla^2 G(\theta)\right)^{-1} \qquad \text{a.s.}
\end{equation}
Moreover, we have already checked that the martingale $(M_n)$ satisfied Lindeberg’s condition. Furthermore, it follows from \eqref{hsna_th_clt_eq10} that
\begin{equation*}
    \sum_{n=2}^\infty \frac{1}{(n\log(n))^2} 
    \E \big [ \lVert\overline{S}_{n}^{-1}\varepsilon_{n+1} \rVert^4 \big\lvert \F_{n} \big] < \infty \qquad \text{a.s.}
\end{equation*}
Consequently, we deduce from the quadratic strong law for multi-dimensional martingale given e.g. by Theorem A.2 in \cite{bercu2021center} that
\begin{equation}\label{hsna_th_qsl_eq5}
    \lim_{n\to +\infty} \dfrac{1}{\log(n)}\sum_{k=1}^n \dfrac{1}{k^2} M_{k}M_{k}^T = \left(\nabla^2 G(\theta)\right)^{-1} \qquad \text{a.s.}
\end{equation}
Hereafter, the almost sure rate of convergence of $(\Delta_n)$ established in \eqref{hsna_th_rates_eq16} is not enough to properly handle the second term in the right-hand side of \eqref{hsna_th_qsl_eq3}. We need to be more precise on that rate by exploiting the upper-bound in \eqref{hsna_th_clt_eq4b}. For that end, we have from \eqref{hsna_th_rates_eq7b} and \eqref{hsna_th_rates_eq11} combined with the almost sure rates of convergence \eqref{hsna_th_rates_res1} and \eqref{hsna_th_rates_res3} that
\begin{equation}\label{hsna_th_qsl_eq6}
    \lVert \Delta_n \rVert = \mathcal{O}\left( \dfrac{\log^2(n)}{n}\right) \qquad \text{a.s.},
\end{equation}
which leads to 
\begin{equation}\label{hsna_th_qsl_eq7}
    \sum_{k=1}^n \lVert \Delta_k \rVert^2 = \mathcal{O}(1)+ \mathcal{O}\left( \sum_{k=1}^n \dfrac{\log^4(k)}{k^2}\right)= \mathcal{O}(1) \qquad \text{a.s.}
\end{equation}
Hence, we immediately deduce that
\begin{equation}\label{hsna_th_qsl_eq8}
     \lim_{n\to +\infty} \dfrac{1}{\log(n)} \sum_{k=1}^n \lVert \Delta_k \rVert^2 = 0 \qquad \text{a.s.}
\end{equation}
By using twice the Cauchy-Schwarz inequality, we also obtain from 
\eqref{hsna_th_qsl_eq5} and \eqref{hsna_th_qsl_eq8} that 
\begin{equation}\label{hsna_th_qsl_eq10}
    \lim_{n\to +\infty} N_n= 0 \qquad \text{a.s.}
\end{equation}
Therefore, it follows from \eqref{hsna_th_qsl_eq3}, \eqref{hsna_th_qsl_eq5}, \eqref{hsna_th_qsl_eq8} and \eqref{hsna_th_qsl_eq10} that
\begin{equation}\label{hsna_th_qsl_eq11}
    \lim_{n\to +\infty} \dfrac{1}{\log(n)}\sum_{k=1}^n ( \widehat{\theta}_{k}-\theta) (\widehat{\theta}_{k}-\theta)^T = \left(\nabla^2 G(\theta)\right)^{-1} \qquad \text{a.s.}
\end{equation}
Furthermore, we have from the Taylor expansion of $G$ with integral remainder that
\begin{equation*}
    G(\widehat{\theta}_{n})-G(\theta)= \dfrac{1}{2}(\widehat{\theta}_{n}-\theta)^T\nabla^2 G(\theta)(\widehat{\theta}_{n}-\theta) + I_n,
\end{equation*}
where
\begin{equation}\label{hsna_th_qsl_eq11b}
    I_n= \dfrac{1}{2}\int_0^1 (\widehat{\theta}_{n}-\theta)^T \Big(\nabla^2 G(\theta + t(\widehat{\theta}_{n}-\theta)) -\nabla^2 G(\theta) \Big) (\widehat{\theta}_{n}-\theta) \text{dt}.
\end{equation}
Then, we obtain that 
\begin{equation}\label{hsna_th_qsl_eq12}
    \dfrac{1}{\log(n)} \sum_{k=1}^n \Big(G(\widehat{\theta}_{k})-G(\theta)\Big)
    =\dfrac{1}{2\log(n)} \sum_{k=1}^n (\widehat{\theta}_{k}-\theta)^T\nabla^2 G(\theta)(\widehat{\theta}_{k}-\theta) +\dfrac{1}{\log(n)} \sum_{k=1}^n I_k.
\end{equation}
However, one can easily see that
\begin{align}
    &\dfrac{1}{\log(n)} \sum_{k=1}^n (\widehat{\theta}_{k}-\theta)^T\nabla^2 G(\theta)(\widehat{\theta}_{k}-\theta)\nonumber\\
    &=\dfrac{1}{\log(n)} \sum_{k=1}^n \Tr\Big((\widehat{\theta}_{k}-\theta)^T\nabla^2 G(\theta)(\widehat{\theta}_{k}-\theta)\Big) \nonumber\\
    &=\dfrac{1}{\log(n)} \sum_{k=1}^n \Tr\Big( \left(\nabla^2 G(\theta)\right)^{1/2}(\widehat{\theta}_{k}-\theta)(\widehat{\theta}_{k}-\theta)^T \left(\nabla^2 G(\theta)\right)^{1/2}\Big) \nonumber\\
    &=\Tr\left( \left(\nabla^2 G(\theta)\right)^{1/2} \left(\dfrac{1}{\log(n)} \sum_{k=1}^n (\widehat{\theta}_{k}-\theta)(\widehat{\theta}_{k}-\theta)^T \right)\left(\nabla^2 G(\theta)\right)^{1/2}\right). \label{hsna_th_qsl_eq13}
\end{align}
Consequently, we deduce from \eqref{hsna_th_qsl_eq11} that
\begin{equation} \label{hsna_th_qsl_eq14}
    \lim_{n\to +\infty} \dfrac{1}{\log(n)} \sum_{k=1}^n (\widehat{\theta}_{k}-\theta)^T\nabla^2 G(\theta)(\widehat{\theta}_{k}-\theta)= d \qquad \text{a.s.}
\end{equation}
Moreover, as in equation \eqref{hsna_th_clt_eq4b}, we obtain from \eqref{hsna_th_qsl_eq11b} that
\begin{equation} \label{hsna_th_qsl_eq14b}
    \lvert I_n \rvert \leqslant \dfrac{d_\Phi^3}{48\sqrt{3}} \lVert\widehat{\theta}_{n}-\theta\rVert^3,
\end{equation}
which implies that
\begin{equation}\label{hsna_th_qsl_eq15}
    \left\lvert \dfrac{1}{\log(n)} \sum_{k=1}^n I_k \right \rvert
    \leqslant \dfrac{d_\Phi^3}{48\sqrt{3}} \dfrac{1}{\log(n)} \sum_{k=1}^n \lVert \widehat{\theta}_{k}-\theta \rVert^3.
\end{equation}
Nevertheless, the almost sure rate of convergence $(\widehat{\theta}_{n})$ given by \eqref{hsna_th_rates_res1} implies that
\begin{equation*}
   \sum_{k=1}^n \lVert \widehat{\theta}_{k}-\theta \rVert^3=  \mathcal{O}(1)+ \mathcal{O}\left( \sum_{k=1}^n \dfrac{\log^{3/2}(k)}{k^{3/2}}\right)= \mathcal{O}(1) \qquad \text{a.s.}
\end{equation*}
which immediately leads to
\begin{equation} \label{hsna_th_qsl_eq16}
    \lim_{n\to +\infty} \dfrac{1}{\log(n)}\sum_{k=1}^n \lVert \widehat{\theta}_{k}-\theta \rVert^3= 0 \qquad \text{a.s.}
\end{equation}
Therefore, we deduce from the combination of \eqref{hsna_th_qsl_eq15} and \eqref{hsna_th_qsl_eq16}
that
\begin{equation}
     \lim_{n\to +\infty} \dfrac{1}{\log(n)} \sum_{k=1}^n I_k = 0\qquad \text{a.s.} \label{hsna_th_qsl_eq17}
\end{equation}
Finally, we conclude from \eqref{hsna_th_qsl_eq12}, \eqref{hsna_th_qsl_eq14} and \eqref{hsna_th_qsl_eq17} that
\begin{equation*}
    \lim_{n\to +\infty} \dfrac{1}{\log(n)} \sum_{k=1}^n \Big(G(\widehat{\theta}_{k})-G(\theta)\Big)= \dfrac{d}{2} \qquad \text{a.s.}
\end{equation*}
which completes the proof of Theorem \ref{hsna_th_qsl}.
\end{proof}
\end{appendix}

%%%%%%%%%%%%%%%%%%%%%%%%%%%%%%%%%%%%%%%%%%%%%%
%% Acknowledgements                         %%
%% should be provided in the                %%
%% Acknowledgements section.                %%
%%%%%%%%%%%%%%%%%%%%%%%%%%%%%%%%%%%%%%%%%%%%%%
% \begin{acks}[Acknowledgments]
% %(to be completed)
% The authors would like to thank the anonymous referees, an Associate
% Editor and the Editor for their constructive comments that improved the
% quality of this paper.
% \end{acks}

%%%%%%%%%%%%%%%%%%%%%%%%%%%%%%%%%%%%%%%%%%%%%%
%% Funding information, if any,             %%
%% should be provided in the                %%
%% funding section.                         %%
%%%%%%%%%%%%%%%%%%%%%%%%%%%%%%%%%%%%%%%%%%%%%%
\begin{funding}
This project has benefited from state support managed by the Agence Nationale de la Recherche (French National Research Agency) under the reference ANR-20-SFRI-0001. 
\end{funding}

\bibliographystyle{imsart-number} % Style BST file (imsart-number.bst or imsart-nameyear.bst)
%\bibliography{bibliography}       % Bibliography file (usually '*.bib')

\end{document}